\let\oldnl\nl
\newcommand{\nonl}{\renewcommand{\nl}{\let\nl\oldnl}}
\renewcommand{\emph}[1]{\textbf{\textit{#1}}}
\renewcommand{\subsection}[1]{\paragraph{\textbf{#1.}}}
\newcommand{\path}{\textsf{path}}
\begin{document}
\title{Exact Learning of Multitrees and Almost-Trees Using Path Queries}
%
%
\author{Ramtin~Afshar\inst{1}\orcidID{0000-0003-4740-1234} \and
Michael~T.~Goodrich\inst{1}\orcidID{0000-0002-8943-191X}}

\authorrunning{R. Afshar, M. T. Goodrich}
%
\institute{Department of Computer Science, University of California, Irvine, USA
\email{\{afsharr,goodrich\}@uci.edu}}
\maketitle              
\begin{abstract}
Given a directed graph, $G=(V,E)$,
a \emph{path} query, \textsf{path}$(u,v)$,
returns whether there is a directed path
from $u$ to $v$ in $G$, for $u,v\in V$.
Given only $V$,
exactly learning all the edges in $G$
using path queries
is often impossible, since
path queries cannot detect transitive edges.
In this paper, we
study the query complexity of exact learning for cases when learning
$G$ is possible using path queries.
In particular,
we provide efficient learning algorithms, as well as lower bounds,
for multitrees and almost-trees, including butterfly networks.

\keywords{graph reconstruction \and exact learning \and directed acyclic graphs}
\end{abstract}

\section{Introduction}

The exact learning of a graph, which is also known as
\emph{graph reconstruction}, is the process of learning how a graph is connected
using a set of queries, each involving a subset of vertices of the
graph, to an all-knowing oracle. In this paper, we focus on learning
a directed acyclic graph (DAG) using path queries. In
particular, for a DAG, $G = (V,E)$, we are given the
vertex set, $V$, but the edge set, $E$, is unknown and learning it
through a set of path queries is our goal.
A \emph{path} query, \textsf{path}$(u,v)$, takes
two vertices, $u$ and $v$ in $V$, and returns whether
there is a directed path from $u$ to $v$ in $G$.

This work is motivated by applications in various disciplines of science, such as biology~\cite{triantafillou2017predicting, meinshausen2016methods, lagani2016probabilistic, tennant2021use}, computer science~\cite{DBLP:conf/nips/BelloH18a, DBLP:journals/jvcir/DiasGR13, DBLP:conf/icassp/BestaginiTT16, DBLP:journals/ieeemm/DiasGR13, DBLP:journals/tifs/DiasRG12, DBLP:journals/jal/GoldbergGPS98, ji2008generating, pfeffer2012malware}, economics~\cite{imbens2020potential, hunermund2019causal}, psychology~\cite{moffa2017using}, and sociology~\cite{heckerman2006bayesian}.
For instance, it can be useful for learning phylogenetic networks
from path queries. Phylogenetic networks capture
ancestry relationships between a group of objects of the same type.
For example, in a digital phylogenetic network, an object may be
a multimedia file (a video or an image)~\cite{DBLP:journals/jvcir/DiasGR13,
DBLP:conf/icassp/BestaginiTT16, DBLP:journals/ieeemm/DiasGR13,
DBLP:journals/tifs/DiasRG12}, a text
document\cite{marmerola2016reconstruction, DBLP:journals/access/ShenFRS18},
or a computer virus~\cite{DBLP:journals/jal/GoldbergGPS98,
pfeffer2012malware}.  In such a network, each node represents an
object, and directed edges show how an object has been manipulated
or edited from other objects~\cite{DBLP:conf/esa/AfsharGMO20}.  In
a digital phylogenetic network, objects are usually archived and
we can issue path queries between a pair of objects 
(see, e.g.,~\cite{DBLP:journals/jvcir/DiasGR13}). 

Learning a phylogenetic
network has several applications. For instance, learning a multimedia
phylogeny can be helpful in different areas such as security,
forensics, and copyright enforcement \cite{DBLP:journals/jvcir/DiasGR13}.
Afshar {\it et al.}~\cite{DBLP:conf/esa/AfsharGMO20} studied learning
phylogenetic trees (rooted trees) using path queries, where each
object is the result of a modification of a single parent. Our work
extends this scenario to applications where objects can be formed by merging
two or more objects into one, such as image components. 
In addition,
our work also has applications in biological 
scenarios that involve hybridization processes in
phylogenetic networks~\cite{barton2001role}.

Another application of our work is to learn the directed acyclic graph (DAG)
structure of a causal Bayesian network (CBN). 
It is well-known that observational data (collected from an undisturbed
system) is not sufficient for exact learning of the structure, and
therefore interventional data is often used, by forcing
some independent variables to take some specific values through
experiments.  An interventional path query requires a small
number of experiments, since, \path$(i,j)$, intervenes the only
variable correspondent to $i$. Therefore, applying our learning
methods (similar to the method by Bello and Honorio,
see~\cite{DBLP:conf/nips/BelloH18a}) can avoid an exponential number
of experiments~\cite{DBLP:conf/nips/KocaogluSB17},
and it can improve the results of Bello and
Honorio~\cite{DBLP:conf/nips/BelloH18a} for the types of DAGs that
we study. 

We measure the efficiency of our methods in terms of the number of vertices, $n = |V|$, using these two complexities:

\begin{itemize}
    \item Query complexity, $Q(n)$: This is the total number of queries that we perform. This parameter comes from the learning theory~\cite{DBLP:conf/birthday/AfshaniADDLM13, DBLP:journals/ai/ChoiK10, DBLP:conf/stoc/DobzinskiV12, DBLP:journals/combinatorica/Tardos89} and complexity theory~\cite{DBLP:journals/iandc/BernasconiDS01, DBLP:journals/jcss/Yao97}.

    \item Round complexity, $R(n)$: This is the number of rounds that we perform our queries. The queries performed in a round are in a batch and they may not depend on the answer of the queries in the same round (but they may depend on the queries issued in the previous rounds).
\end{itemize}

\subsection{Related Work}

The problem of exact learning of a graph using a set of queries has
been extensively studied~\cite{DBLP:conf/esa/AfsharGMO20,
DBLP:conf/spaa/AfsharGMO20, DBLP:journals/tcs/RongLYW21,
DBLP:conf/esa/Mathieu021, DBLP:conf/spaa/AfsharGMO21,
DBLP:conf/stacs/AfsharGMO22, DBLP:journals/tcs/RongYLW22,
DBLP:journals/corr/abs-2002-11541, DBLP:conf/stacs/AbrahamsenBRS16,
DBLP:conf/allerton/WangH19, DBLP:conf/alt/JagadishS13, hein1989optimal,
DBLP:conf/soda/KingZZ03, DBLP:journals/ipl/ReyzinS07}.  With regard
to previous work on learning directed graphs using path queries,
Wang and Honorio~\cite{DBLP:conf/allerton/WangH19} present a
sequential randomized algorithm that takes $Q(n) \in O(n \log^2{n})$
path queries in expectation to learn rooted trees of maximum degree,
$d$. Their divide and conquer approach is based on the notion of
an even-separator, an edge that divides the tree into two subtrees of
size at least $n/d$. Afshar {\it et al.}~\cite{DBLP:conf/esa/AfsharGMO20}
present a randomized parallel algorithm for the same problem using
$Q(n) \in O(n \log n)$ queries in $R(n) \in O(\log n)$ rounds with
high probability (w.h.p.)\footnote{We say that an event happens
  with high probability if it occurs with probability 
  at least $1 - \frac{1}{n^c}$, for some constant $c \geq 1$.}, 
which instead relies
on finding a near-separator, an edge that separates the tree into
two subtrees of size at least $n/ (d+2)$, through a ``noisy'' process
that requires noisy estimation of the number of descendants of a
node by sampling. Their method, however, relies on the fact the
ancestor set of a vertex in a rooted tree forms a total order. In
section~\ref{sec:near-trees}, we extend their work to learn a rooted
spanning tree for a DAG. Further, they show that learning a degree-$d$ rooted tree with $n$ nodes
requires $\Omega(nd + n \log n)$ path queries~\cite{DBLP:conf/esa/AfsharGMO20}.

Regarding the reconstruction of trees with a specific height,
Jagadish and Anindya \cite{DBLP:conf/alt/JagadishS13} present a
sequential deterministic algorithm to learn undirected fixed-degree
trees of height $h$ using $Q(n) \in O(n h \log{n})$ separator
queries, where a separator query given three vertices $a$, $b$, and
$c$, it returns ``true'' if and only if $b$ is on the path from $a$
to $c$. They also use this method as a subroutine to present an
algorithm to learn fixed-degree undirected trees of arbitrary height
using $O(n^{3/2} \log n)$ separator queries. Janardhanan and
Reyzin~\cite{DBLP:journals/corr/abs-2002-11541} study the problem
of learning an almost-tree of height $h$ (a directed rooted tree
with an additional cross-edge), and they present a randomized
sequential algorithm using $Q(n) \in O(n  \log^3 {n} + nh)$ queries.

\subsection{Our Contributions}

In Section~\ref{sec:multitrees}, we present our learning algorithms
for multitrees---a DAG with at most one directed path for any
two vertices. We begin, however, by first presenting a deterministic
result for learning directed rooted trees using path queries, giving
a sequential deterministic approach to learn fixed-degree
trees of height $h$, with $O(nh)$ queries, which
provides an improvement over results by Jagadish and
Anindya~\cite{DBLP:conf/alt/JagadishS13}. 
We then show how to use a tree-learning method to design 
an efficient
learning method for a multitree with $a$ roots using
$Q(n) \in O(a n \log n)$ queries and $R(n) \in  O(a \log n)$ 
rounds w.h.p. We finally show how to use our tree learning method to design
an algorithm with $Q(n) \in O(n^{3/2} \cdot \log^2{n})$ queries
to learn butterfly networks w.h.p.

In Section~\ref{sec:near-trees}, we introduce a separator theorem
for DAGs, which is useful in learning a spanning-tree of
a rooted DAG. Next, we present a parallel algorithm to learn
almost-trees of height $h$, using $O(n \log {n} + nh)$ path queries
in $O(\log n)$ parallel rounds w.h.p. We also provide a lower bound
of $\Omega(n \log {n} + nh)$ for the worst case query complexity
of a deterministic algorithm or an expected query complexity of a
randomized algorithm for learning fixed-degree almost-trees proving
that our algorithm is optimal. Moreover, this asymptotically-optimal
query complexity bound, improves the sequential query complexity
for this problem, since the best known results by Janardhanan and
Reyzin~\cite{DBLP:journals/corr/abs-2002-11541} achieved a query
complexity of $O(n \log^3 {n} + nh)$ in expectation.


\section{Preliminaries}
For a DAG, $G = (V, E)$, we represent the in-degree and out-degree
of vertex $v\in V$ with $d_i(v)$ and $d_o(v)$ respectively. Throughout
this paper, we assume that an input graph has maximum degree, $d$, i.e.,
for every $v \in V$, $d_i(v) + d_o(v) \leq d$. A vertex, $v$, is a
root of the DAG if $d_i(v)=0$. A DAG may have several roots, but
we call a DAG rooted if it has only one root.  Note that in a rooted
DAG with root $r$, there is at least one directed path from $r$ to
every $v \in V$.

\begin{definition}
\emph{(arborescence)} An arborescence is a rooted DAG with root $r$ that has exactly one path from $r$ to each vertex $v \in V$. It is also referred to as a spanning directed tree at root $r$ of a directed graph.
\end{definition}

We next introduce multitree, which is a family of DAGs useful in distributed computing~\cite{DBLP:conf/cse/ColomboLKG18, DBLP:journals/iandc/ItaiR88} that we study in Section~\ref{sec:multitrees}.

\begin{definition}
\emph{(multitree)} A multitree is a DAG in which the subgraph reachable from any vertex induces a tree, that is, it is a DAG with at most one directed path for any pair of vertices.
\end{definition}

We next review the definition of a butterfly network, which is a multitree used in high speed distributed computing~\cite{DBLP:conf/spaa/Ranade91, DBLP:journals/networks/ComellasFGM03, DBLP:conf/soda/GoodrichJS21} for which we provide efficient learning method in Section~\ref{sec:multitrees}.

\begin{definition}\label{def:butterfly}
\emph{(Butterfly network)} A butterfly network, also known as depth-$k$ Fast Fourier Transform (FFT) graph is a DAG recursively defined as $F^k = (V, E)$ as follows:
\begin{itemize}
    \item For $k=0$: $F^0$ is a single vertex, i.e. $V = \{v\}$ and $E = \{ \}$.
    \item Otherwise, suppose $F^{k-1}_A = (V_A, E_A)$ and $F^{k-1}_B = (V_B, E_B)$ each having $m$ sources and $m$ targets $(t_0, ..., t_{m-1}) \in V_A$ and $(t_m, ..., t_{2m-1}) \in V_B$. Let $V_C = (v_0, v_1, ..., v_{2m-1})$ be $2m$ additional vertices. We have $F^k = (V, E)$, where $V = V_A \cup V_B \cup V_C$ and $E = E_A \cup E_B \bigcup_{0 \leq i \leq m-1} (t_i, v_i) \cup (t_i,  v_{i+m}) \cup (t_{i+m}, v_i) \cup (t_{i+m}, v_{i+m})$ (See Figure~\ref{fig:butterfly} for illustration).
\end{itemize}
\end{definition}

\begin{figure}[b!]
\vspace{-18pt}
    \centering
\includegraphics[width = \textwidth]{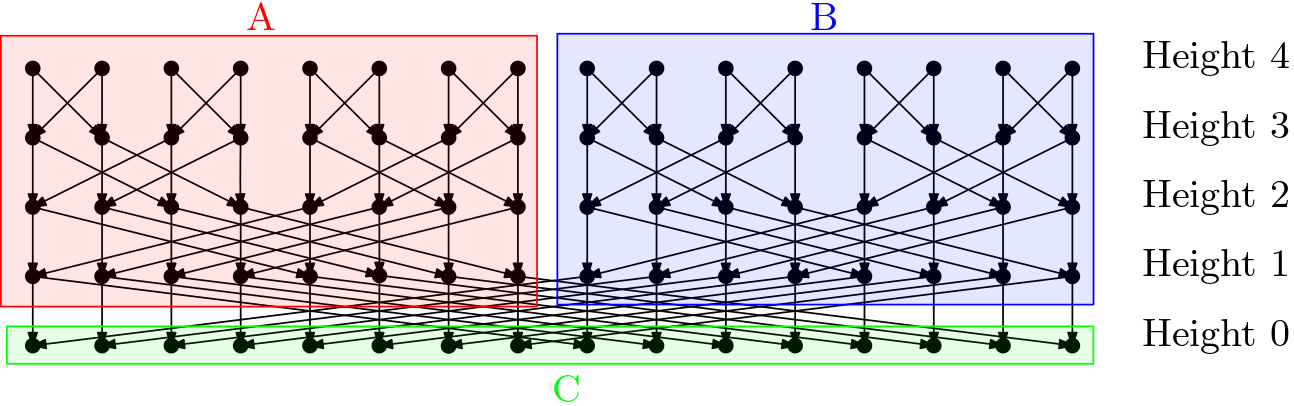}
\vspace{-24pt}
    \caption{An example of a butterfly network with height $4$ (Depth $4$), $F^4$, as a composition of two $F^3$ ($A$ and $B$) and $2^4$ additional vertices, $C$, in Height $0$.}
    \label{fig:butterfly}
   \vspace{-8pt}
\end{figure}

\begin{definition}
\emph{(ancestory)} Given a directed acyclic graph, $G = (V,E)$, we say $u$ is a \emph{parent} of a vertex $v$ ($v$ is a \emph{child} of $u$), if there exists a directed edge $(u,v) \in E$. The \emph{ancestor} relationship is a transitive closure of the parent relationship, and \emph{descendant} relationship is a transitive closure of child relationship. We denote the descendant (resp. ancestor) set of vertex $v$, with $D(v)$, (resp. $A(v)$). Also, let $C(v)$ denote children of $v$.
\end{definition}

\begin{definition}
A \emph{path query} in a directed graph, $G = (V, E)$, is a function that takes two vertices $u$ and $v$, and returns $1$, if there is a directed path from $u$ to $v$, and returns $0$ otherwise. We also let $count(s, X) =  \Sigma_{x \in X} path(s,x)$.
\end{definition}

As Wang and Honorio observed~\cite{DBLP:conf/allerton/WangH19},
transitive edges in a directed graph are not learnable by path queries. 
Thus, it is not possible using path queries 
to learn all the edges for a number of directed graph types, including
strongly connected graphs and
DAGs that are not equal to their transitive reductions (i.e., graphs that
have at least one transitive edge).
Fortunately,
transitive edges are not likely in phylogenetic networks due to their
derivative nature, so, we focus on learning DAGs without transitive edges.

\begin{definition}
In a directed graph, $G = (V, E)$, an edge $(u,v) \in E$ is called a \emph{transitive edge} if there is a directed path from $u$ to $v$ of length greater than $1$.
\end{definition}


\begin{definition}\label{def:almost-tree}
\emph{(almost-tree)} An almost-tree is a rooted DAG resulting from the union of an arborescence and an additional cross edge. 
The \emph{height} of an almost-tree is the length of its longest directed path.
\end{definition}

Note: some researchers define almost-trees to have a constant number of 
cross edges (see, e.g.,~\cite{akutsu1993polynomial,bannister2013fixed}.
But allowing more than one cross edge can cause
transitive edges; hence, almost-trees with more than one cross edge are not
all learnable using path queries, which is why we 
follow Janardhanan and Reyzin~\cite{DBLP:journals/corr/abs-2002-11541}
to limit almost-trees to have one cross edge.
We next introduce even-separator, which will be used in Section~\ref{sec:near-trees}.

\begin{definition}\label{def:even-separator}
\emph{(even-separator)} Let $G = (V, E)$ be a rooted degree-$d$ DAG. We say that vertex $v \in V$ is an even-separator if
$ \frac{|V|}{d} \leq
count(v, V)  \leq \frac{|V|  (d-1)}{d} $.
\end{definition}

\section{Learning Multitrees} \label{sec:multitrees}

In this section, we begin by presenting a deterministic algorithm to learn
a rooted tree (a multitree with a single root) of height $h$, using
$O(nh)$ path queries. Then we derive a deterministic algorithm to
learn fixed-degree directed rooted trees of any height with 
$O(n^{3/2} \sqrt{\log{n}})$ path queries. We also provide a simple reduction
from the problem of learning directed rooted trees using path queries
to the problem of learning undirected trees using separator queries,
establishing an improvement
upon results by Jagadish and Anindya~\cite{DBLP:conf/alt/JagadishS13}.
These results form building blocks for the main results of this section,
which are an efficient algorithm to learn a multitree
of arbitrary height with $a$ number of roots and 
an efficient algorithm to learn a butterfly network.

\subsection{Rooted Trees}\label{subsec:directed-trees}
Let $T = (V, E, r)$ be a directed tree rooted at $r$ with maximum degree 
that is a constant, $d$, with vertices, $V$, and edges, $E$. 
At the beginning of any exactly learning algorithm, 
we only know $V$, and $n = |V|$, and our goal is to learn $r$, and $E$ by issuing path queries. 

To begin with, learning the root of the tree can be deterministically done using $O(n)$ path queries as suggested by Afshar {\it et al.} \cite[Corollary~10]{DBLP:conf/esa/AfsharGMO20}. Their approach is to first pick an arbitrary vertex $v$, (ii) learning its ancestor set and establishing a total order on them, and (iii) finally applying a maximum-finding algorithm~\cite{DBLP:conf/stoc/ColeV86, DBLP:conf/conpar/ShiloachV81, DBLP:journals/siamcomp/Valiant75} by simulating comparisons using path queries.

Next, we show how to learn the edges, $E$. 
Jagadish and Anindya~\cite{DBLP:conf/alt/JagadishS13} proposed 
an algorithm to reconstruct
fixed-degree trees of height $h$ using $O(n h \log n)$ queries.
Their approach is to find an edge-separator---an edge that
splits the tree into two subtrees each having at least $n/d$
vertices---and then to recursively build the two subtrees.  In order
to find such an edge, (i) they pick an arbitrary vertex, $v$, and
learn an arbitrary neighbor of it such as, $u$, (ii) if $(u,v)$ is
not an edge-separator, they move to the neighboring edge that
lies on the direction of maximum vertex set size. Hence, at each
step after performing $O(n)$ queries, they get one step closer to
the edge-separator. Therefore, they learn the edge-separator using $O(nh)$ queries, and they 
incur an extra $O(\log n)$ factor to build the tree recursively
due to their edge-separator based recursive approach.

We show that finding an edge-separator for a deterministic
algorithm is unnecessary, however. We instead propose a vertex-separator
based learning algorithm.  Our \textsf{learn-short-tree} method
takes as an input, the vertex set, $V$, and root vertex, $r$, and
returns edges of the tree, $E$.  Let $\{r_1, \ldots, r_d\}$ be a
tentative set of children for vertex $r$ initially set to
$\mathit{Null}$, and for $1 \leq i \leq d$, let $V_i$ represents
the vertex set of the  subtree rooted at $r_i$. For $1 \leq i \leq
d$, we can find child $r_i$, by starting with an arbitrary vertex
$r_i$, and looping over $v \in V$ to update $r_i$ if for $v
\neq  r$, \textsf{path}$(v, r_i) = 1$. Since, in a rooted tree, an
ancestor relationship for ancestor set of any vertex is a total
order, $r_i$ will be a child of root $r$. Once we learn $r_i$, its
descendants are the set of nodes $v \in V$ such that 
\textsf{path}$(r_i, v) = 1$. We then remove $V_i$ from the set of vertices of $V$ to
learn another child of $r$ in the next iteration. It finally returns
the union of edges $(r, r_i)$ and edges returned by the recursive
calls \textsf{learn-short-tree}$(V_i, r_i)$, for $1 \leq i \leq d$
(see Algorithm~\ref{alg:learn-short-tree}).

\setlength{\textfloatsep}{10pt}
\begin{algorithm}[t]
\caption{Our algorithm to learn trees of height-$h$}
\label{alg:learn-short-tree}

\SetAlgoNoLine
\SetKwFunction{FMain}{\textsf{learn-short-tree}}
\SetKwProg{Fn}{Function}{:}{}
\Fn{\FMain{$V, r$}}{
\DontPrintSemicolon
    $E \gets \emptyset$,
    $V \gets V \setminus \{ r \}$

     \For{$i \gets 1$ to $d$}{
        $r_i \gets \mathit{Null}, V_i \gets \emptyset$
    }
     \For{$i \gets 1$ to $d$}{
        \If{$|V| \geq 1$}{
            Let $r_i$ be an arbitrary vertex in $V$

            \For{$v \in V$}{
                \lIf{\textsf{path}$(v, r_i) =1$}{
                $r_i \gets v$
                }
            }
            \For{$v \in V$}{
                \lIf{\textsf{path}$(r_i, v) =1$}{
                $V_i \gets V_i \cup \{ v \}$
                }
            }

            $V \gets V \setminus V_i$

            $E \gets E \cup (r, r_i)$

            $E \gets E \cup \textsf{small-height-tree-reconstruction}(V_i,r_i)$
        }
    }

   \KwRet $E$

}
\end{algorithm}

The query complexity, $Q(n)$, for learning the tree is as following:

\begin{equation}
Q(n) = \Sigma_{i=1}^{d} Q(|V_i|) + O(n)
\end{equation}

Since the height of the tree is reduced by at least $1$ for each recursive call, $Q(n) \in O(nh)$.  Hence, we have the following theorem.

\begin{theorem}
  One can deterministically learn a fixed-degree height-$h$ directed rooted tree using $O(nh)$ path queries.
\end{theorem}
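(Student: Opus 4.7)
The plan is to verify that Algorithm~\ref{alg:learn-short-tree} correctly returns the edge set $E$ when supplied with the already-found root $r$, and then bound its query cost by induction on the height $h$. Correctness I would prove by induction on $h$, maintaining the invariant that each recursive call on $(V, r)$ is invoked with $r$ together with the vertex set of the subtree rooted at $r$. The base case $h = 0$ is immediate: the outer loop is skipped and $E = \emptyset$ is returned.

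For the inductive step, fix an iteration of the outer loop with current remaining set $V$ (after removals from earlier iterations of the same call) and consider the arbitrary initial $r_i \in V$. Since $r$ is removed from $V$ at the start, and since earlier iterations only removed the vertex sets of already-discovered children's subtrees, the initial $r_i$ lies in the subtree of some as-yet-unprocessed child $c$ of $r$; moreover every strict ancestor of $r_i$ below $r$ still lies in $V$. The first inner loop replaces $r_i$ with any $v \in V$ satisfying \textsf{path}$(v, r_i) = 1$, and because the ancestor chain of any vertex in a rooted tree is totally ordered, after the scan $r_i$ equals the unique topmost element of this chain present in $V$, namely $c$. The second inner loop then computes $V_i = \{v \in V : \textsf{path}(r_i, v) = 1\}$, and because descendant sets of distinct children of $r$ are disjoint in a tree, $V_i$ coincides exactly with the vertex set of the subtree rooted at $c$. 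The recursive call on $(V_i, r_i)$ meets the invariant and, by the inductive hypothesis applied to height at most $h-1$, correctly returns the edges of that subtree; adding the edge $(r, r_i)$ completes the edges incident to $r$ along that branch.

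For the query complexity, each iteration of the outer loop performs at most $2|V| = O(n)$ path queries, and since $d$ is constant, the outer loop contributes $O(n)$ queries at the top level. With $n_i := |V_i|$ and $\sum_i n_i \le n - 1$, the recurrence
\begin{equation*}
Q(n,h) \;\le\; \sum_{i=1}^{d} Q(n_i, h-1) + c\,n
\end{equation*}
together with $Q(1,0) = O(1)$ resolves to $Q(n,h) = O(nh)$ by a routine induction: the per-level work summed over sibling subtrees is at most $cn$, and the recursion depth is at most $h$. Combining with the $O(n)$ path queries used to first identify the root $r$ (via the procedure cited from Afshar \emph{et al.}~\cite{DBLP:conf/esa/AfsharGMO20}) preserves the $O(nh)$ bound.

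The main subtlety I expect to justify carefully lies in the first inner scan: the claim that the topmost ancestor of the initial $r_i$ surviving in $V$ is a child of $r$, rather than something higher or something incomparable. This rests on two observations that must be spelled out together: $r$ itself has been removed from $V$ at the beginning of the call, so the scan cannot climb all the way to $r$; and the removals performed in earlier iterations deleted only descendants of previously discovered children of $r$, so they never stripped away any vertex on the path from $r_i$ up to its own child-of-$r$ ancestor. Once this invariant is in place, both correctness and the $O(nh)$ bound follow by straightforward induction.
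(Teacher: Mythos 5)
Your proposal is correct and takes essentially the same route as the paper: you analyze Algorithm~\ref{alg:learn-short-tree}, observe that each iteration of the outer loop isolates one child of $r$ and the vertex set of its subtree, and bound the query cost via the recurrence $Q(n,h)\le\sum_i Q(n_i,h-1)+O(n)$ with $\sum_i n_i\le n-1$, which resolves to $O(nh)$ because the per-level work across siblings is $O(n)$ and the recursion depth is at most $h$. Your correctness argument is in fact more explicit than the paper's, which simply asserts that the total order on ancestor chains forces $r_i$ to become a child of $r$; you correctly pin down the two facts that make this work (that $r$ is removed before the scan, and that earlier iterations only delete subtrees of other children, so the climb from the initial $r_i$ cannot be blocked or overshoot).
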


This, in turn, implies
the following theorem (\ref{thm:arbitrary}) for 
rooted trees of arbitrary height by employing our method \textsf{learn-short-tree} in an algorithm by Jagadish and Andyia~\cite{DBLP:conf/alt/JagadishS13}, which was introduced for learning undirected trees of large height using separator queries.

\begin{theorem}\label{thm:arbitrary}
One can deterministically learn a fixed-degree directed rooted tree of arbitrary height using $O(n^{3/2} \sqrt{\log {n}})$ path queries.
\end{theorem}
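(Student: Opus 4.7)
The plan is to combine our height-$h$ subroutine with Jagadish and Anindya's framework~\cite{DBLP:conf/alt/JagadishS13} for reconstructing trees of arbitrary height. Their framework is originally phrased for undirected trees using separator queries, so I first give a simple reduction: assuming the root $r$ of the input directed tree has already been learned (in $O(n)$ path queries, as noted earlier for rooted trees), I simulate a separator query ``does $b$ lie on the $a$-to-$c$ path in the underlying undirected tree?'' using $O(1)$ path queries. The case analysis turns on whether $b$ is an ancestor of the two endpoints: vertex $b$ lies on the undirected $a$-to-$c$ path exactly when $b\in\{a,c\}$, or \textsf{path}$(b,a)\neq$\textsf{path}$(b,c)$, or $b$ is the lowest common ancestor of $a$ and $c$. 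The LCA sub-case can be handled in amortized $O(1)$ time per separator query by reusing ancestor information already revealed by earlier path queries within the recursion.

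Given this reduction, I plug our improved subroutine into Jagadish and Anindya's arbitrary-height framework in place of theirs. That framework uses a parameter $h$ and splits the work into a coarse decomposition of the tree into pieces of height at most $h$, plus a separate height-$h$ reconstruction on each piece; with their original subroutine of cost $O(nh\log n)$ and a decomposition cost of order $O(n^{2}\log n / h)$, balancing at $h=\Theta(\sqrt{n})$ yields their $O(n^{3/2}\log n)$ bound. Replacing the subroutine by our improved $O(nh)$ algorithm from Algorithm~\ref{alg:learn-short-tree} changes the total cost to $O(nh + n^{2}\log n / h)$, which is minimized at $h=\Theta(\sqrt{n\log n})$ and gives $O(n^{3/2}\sqrt{\log n})$ path queries overall, exactly matching the claimed bound.

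The main obstacle is simulating separator queries with only constant overhead per query, and in particular handling the LCA sub-case: a naive test of ``is $b$ the lowest common ancestor of $a$ and $c$?'' could cost $\Omega(n)$ path queries on its own. To keep the simulation cheap I would carry partial ancestor-list information along the subpaths already reconstructed by the outer framework, so that each LCA test the framework actually triggers can be answered in $O(1)$ additional queries. Once this bookkeeping is in place, the rebalancing described above directly yields Theorem~\ref{thm:arbitrary}.
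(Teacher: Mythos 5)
Your overall plan matches the paper's proof: combine Jagadish--Anindya's arbitrary-height framework with the improved $O(nh)$ \textsf{learn-short-tree} subroutine, and rebalance the controlling parameter to obtain $O(n^{3/2}\sqrt{\log n})$. The balancing you do ($h=\Theta(\sqrt{n\log n})$) is precisely what the paper does (they set $l=\sqrt{n/\log n}$ with $h=n/l$), and the high-level accounting $O(nh + n^2\log n/h)$ is the same.

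Where you diverge, and where there is a genuine gap, is in the separator-query simulation. You set out to simulate a general separator query $sep(a,b,c)$ with $O(1)$ path queries, and you correctly identify that this forces an LCA test (when $\textsf{path}(b,a)=\textsf{path}(b,c)=1$, deciding whether $b$ is on the $a$-to-$c$ path is deciding whether $b=\mathrm{lca}(a,c)$). You then hand-wave this away by "carrying partial ancestor-list information along the subpaths already reconstructed" so that each LCA test is $O(1)$ amortized. That is exactly the step that needs a proof: you give no invariant on what ancestor information is guaranteed to be available when a particular $sep(a,b,c)$ is issued, nor an argument that the total bookkeeping cost does not blow up the $O(nl\log n)$ bound for building $T'$. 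As written, this is a missing idea rather than a detail. The paper avoids the problem entirely with a sharper observation: in the $T'$-building phase of Jagadish--Anindya every separator query is of the restricted form $sep(r,x,y)$ where $r$ is the fixed root. For such queries, $sep(r,x,y)=1$ iff $x$ is an ancestor of $y$ iff $\textsf{path}(x,y)=1$, so each is simulated by a single path query and the LCA sub-case never arises. (The remaining work in the framework -- reconstructing the short missing pieces -- is done directly by \textsf{learn-short-tree} with path queries, so no separator queries need to be simulated there at all.) To close your argument you should either establish the root-anchored form of the framework's separator queries, or actually specify and charge the LCA bookkeeping; the former is the clean route.
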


\begin{proof}
 See Appendix~\ref{app:arbitrary-height}.
\end{proof}

In addition,
Theorem~\ref{thm:alg_reduction} shows that our results improve the ones by Jagadish and Anindya~\cite{DBLP:conf/alt/JagadishS13}.

\begin{theorem}\label{thm:alg_reduction}
Let $T=(V,E)$ be a fixed-degree undirected tree. If $T$ has height $h$, one can deterministically learn $T$ with $O(nh)$ separator queries, and if it has an arbitrary height, one can deterministically learn $T$ with $O(n^{3/2} \sqrt{\log {n}})$ queries.
\end{theorem}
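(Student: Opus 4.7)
The plan is to reduce learning an undirected tree $T$ with separator queries to learning a directed rooted tree with path queries, then invoke the bounds already established in this section. I would first fix a root $r\in V$ of $T$: in the height-$h$ case I choose $r$ so that orienting every edge of $T$ away from $r$ produces an arborescence $T_r$ of directed height $h$ (such a vertex exists by the undirected notion of height), while for the arbitrary-height case any vertex of $V$ will do. Selecting $r$ costs no queries, since it is an internal decision of the learner.

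The core of the reduction is a one-to-one simulation of path queries by separator queries. In the arborescence $T_r$, a vertex $u$ is an ancestor of $v$, i.e., $\textsf{path}(u,v)=1$, if and only if $u$ lies on the unique simple path in $T$ from $r$ to $v$; but this is precisely the condition reported by the separator query on the triple $(r,u,v)$. Hence any algorithm that learns $T_r$ using $Q(n)$ path queries can be executed directly on $T$ using $Q(n)$ separator queries, and we recover the edge set of $T$ by forgetting the orientations of the directed edges it returns.

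Plugging in the two learning algorithms of this section then finishes both halves of the statement. Running Algorithm~\ref{alg:learn-short-tree} on $T_r$ gives $O(nh)$ separator queries for the height-$h$ case, and running the procedure of Theorem~\ref{thm:arbitrary} gives $O(n^{3/2}\sqrt{\log n})$ separator queries in the arbitrary-height case. The main point requiring care, and essentially the only one, is verifying that the chosen root $r$ yields a directed height not exceeding the undirected height $h$, so that no extra factor slips into the first bound; once that is confirmed, the proof is essentially a relabeling of the directed-tree results.
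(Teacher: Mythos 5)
Your core reduction is the same as the paper's: fix a root $r$, orient edges of $T$ away from $r$, observe that $\textsf{path}(u,v)=1$ if and only if $sep(r,u,v)=1$, and then run the directed-tree learner while paying one separator query per simulated path query. That is exactly what Algorithm~\ref{alg:alg_reduction} does.

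However, your handling of the root contains a gap. You claim that in the height-$h$ case you will ``choose $r$ so that orienting every edge of $T$ away from $r$ produces an arborescence $T_r$ of directed height $h$'' and that ``Selecting $r$ costs no queries, since it is an internal decision of the learner.'' This is not a valid move: the learner does not know the structure of $T$ (that is what it is trying to learn), so it has no way to identify a vertex of minimum eccentricity without spending queries. You even flag the resulting obligation---``verifying that the chosen root $r$ yields a directed height not exceeding the undirected height $h$''---as the main point requiring care, but you never discharge it, and as stated it cannot be discharged for free.

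Fortunately, the complication you introduced is unnecessary. The paper simply picks an \emph{arbitrary} root $r$. The directed height of $T_r$ is then the eccentricity of $r$, which is bounded above by the diameter of $T$; under any of the standard readings of ``undirected height $h$'' (radius or diameter), the diameter is $O(h)$, so the directed height is $O(h)$ and the $O(nh)$ bound survives with no special choice of root. Replace your root-selection step with ``pick $r$ arbitrarily'' and add one sentence observing that the eccentricity of any vertex is $O(h)$, and the proof is complete and matches the paper's.
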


\begin{proof}
  See Appendix~\ref{app:sep-queries}.
\end{proof}


\subsection{Multitrees of Arbitrary Height}\label{subsec:multi-arb-height}

We next provide a parallel algorithm to learn a multitree of arbitrary height with $a$ number of roots. Remind that Wang and Honorio~\cite[Theorem 8]{DBLP:conf/allerton/WangH19} prove that learning a multitree with $\Omega(n)$ roots requires $\Omega(n^2)$ queries.
Suppose that $G = (V,E)$ is a multitree with  $a$ roots. We show that we can learn $G$ using $Q(n) \in O(a n \log n)$ queries in $R(n) \in O(a \log n)$ parallel rounds w.h.p.

Let us first explain how to learn a root. Our \textsf{learn-root} method learns a root using $Q(n) \in O(n)$ queries in $R(n) \in O(1)$ rounds w.h.p. Note that in a multitree with more than one root, the ancestor set of an arbitrary vertex does not necessarily form a total order, so, we may not directly apply a parallel maximum finding algorithm on the ancestor set to learn a root.

Our \textsf{learn-root} method takes as input vertex set $V$, and
returns a root of the DAG. It first learns in parallel, $Y$, the
ancestor set of $v$ (the nodes $u \in V$ such that $path(u,v)=1$).
While $|Y|> m$, where $m = C_1 * \sqrt{|V|}$ for some constant $C_1$
fixed in the analysis, it takes a sample, $S$, of expected size of
$m$ from $Y$ uniformly at random.  Then, it performs path queries
for every pair $(a,b) \in S \times S$ in parallel to learn a partial
order of $S$, that is, we say $a < b$ if and only if $path(a,  b) = 1$. 
Hence, a root of the DAG should be an ancestor of a minimal
element in $S$. Using this fact, we keep narrowing down $Y$ until 
$ |Y| \leq m$, when we can afford to generate a partial order of $Y$
in Line~\ref{line:root-brute-force-comparison}, and return a minimal
element of $Y$ (see Algorithm~\ref{alg:learn-root}).

\begin{algorithm}[hbt]
\caption{Our algorithm to find a root in $V$}\label{alg:learn-root}
\SetKwProg{Fn}{Function}{:}{}
\SetKwFunction{FMain}{\textsf{learn-root}}
\SetAlgoNoLine
\nonl \Fn{\FMain{$V$}}{
\DontPrintSemicolon
    $m = C_1 * \sqrt{|V|}$

    Pick an arbitrary vertex $v \in V$

    \ForPar{each $u \in V$}{
        Perform query $path(u,v)$ to find ancestor set $Y$
    }

    \While{$|Y|> m $}{
        $ S \gets $ a random sample of expected size $m$ from $Y$

        \ForPar{$(a,b) \in S \times S $ }{
           \nonl Perform query $path(a,b)$
        }

            Pick a vertex $y \in S$ such that for all $a \in S$: $path(a,y)==0$

             \ForPar{$a \in Y$}{
           \nonl     Perform query $path(a,y)$ to find ancestors of $y$, $Y'$
            }
        $Y \gets Y'$
    }
    \ForPar{$(a,b) \in Y \times Y $ }{
          \nonl  Perform query $path(a,b)$ \label{line:root-brute-force-comparison}
    }

            $y \gets$  a vertex in $Y$ such that for all $a \in Y$: $path(a,y)==0$

        \KwRet{$y$}

    }

\end{algorithm}

Before providing the anlaysis of our efficient \textsf{learn-root} method, let us present Lemma~\ref{lem:elements_scattered}, which is an important lemma throughout our analysis, as it extends Afshar {\it et al.} \cite[Lemma~14]{DBLP:conf/esa/AfsharGMO20} to directed acyclic graphs.

\begin{lemma}\label{lem:elements_scattered}
Let $G=(V,E)$ be a DAG, and let $Y$ be the set of vertices formed by the union of at most $c$ directed (not necessarily disjoint) paths, where $c \leq |V|$
 and $ |Y| > m = C_1 \sqrt{|V|}$. If we take a sample, $S$, of $m$
elements from $Y$, then with probability $1 - \frac{1}{|V|^2}$, for each of these $c$ paths such as $P$, every
two consecutive nodes of $S$ in the sorted order of $P$ are within
distance $O(|Y| \log |V| / \sqrt {|V|})$ from
each other in $P$.
\end{lemma}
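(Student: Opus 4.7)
The plan is to argue via a window/coverage argument combined with a Chernoff-style bound and a union bound. First I would fix the sampling model: since $S$ has \emph{expected} size $m$, I would treat it as the result of Bernoulli sampling, where each vertex of $Y$ is included in $S$ independently with probability $p = m/|Y|$. I would then fix a single path $P$ in the union defining $Y$, and partition (or slide) along $P$ a family of windows, each consisting of $k$ consecutive vertices of $P$, where $k = C_2\, |Y| \log |V| / \sqrt{|V|}$ for a sufficiently large constant $C_2$. The key observation I would use is purely combinatorial: if every window of $k$ consecutive vertices on $P$ contains at least one sampled vertex, then two consecutive sampled vertices on $P$ cannot be more than $k$ apart, which is exactly the conclusion we want for path $P$.

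Next I would bound the probability that a fixed window $W$ of $k$ vertices on $P$ is empty of samples. Because the $k$ vertices of $W$ are each independently included in $S$ with probability $p = m/|Y|$, we have
\[
\Pr[S \cap W = \emptyset] \;=\; (1-p)^k \;\leq\; e^{-pk} \;=\; \exp\!\left(-\frac{m}{|Y|}\cdot\frac{C_2\,|Y|\log|V|}{\sqrt{|V|}}\right) \;=\; |V|^{-C_1 C_2}.
\]
There are at most $|P| \leq |Y| \leq |V|$ such windows on $P$, and at most $c \leq |V|$ paths in total, so a union bound gives
\[
\Pr[\text{some window on some path has no sample}] \;\leq\; |V|\cdot|V|\cdot|V|^{-C_1 C_2} \;=\; |V|^{2 - C_1 C_2}.
\]
Choosing $C_1 C_2 \geq 4$ makes this at most $1/|V|^2$, and by the window observation above, on the complementary event every two consecutive sampled vertices on every path $P$ are within distance $k = O(|Y|\log|V|/\sqrt{|V|})$, as required.

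The main conceptual hurdle, and where I would spend care, is the step that translates ``every length-$k$ window on $P$ contains a sample'' into ``consecutive samples of $S$ on $P$ are within $k$'' — this is immediate once phrased correctly (a gap of $k{+}1$ between consecutive samples exposes an empty window), but it is the one place where the argument is not just a mechanical Chernoff bound. The remaining delicacies are bookkeeping: verifying that the lemma's statement concerns only consecutive samples on $P$ (so we do not need to bound distances from the endpoints of $P$ to the first/last sample), and noting that independence of the Bernoulli indicators across $W$ holds because the vertices of $W$ are distinct even though the $c$ paths may share vertices in general. Finally, I would note that any constants $C_1, C_2$ with $C_1 C_2 \geq 4$ suffice, which matches the $C_1$ already fixed by Algorithm~\ref{alg:learn-root}; if a larger high-probability exponent is desired elsewhere, the same argument absorbs it by increasing $C_2$.
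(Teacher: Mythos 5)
Your proof is correct and takes essentially the same route as the paper's: cover each path with intervals of length $\Theta(|Y|\log|V|/\sqrt{|V|})$, show each interval catches at least one sample with probability inverse-polynomially close to $1$, and union-bound over intervals and paths. The only differences are minor refinements of the same idea --- you use sliding windows and the elementary estimate $(1-p)^k \le e^{-pk}$, whereas the paper partitions each path into disjoint sections and invokes a Chernoff bound (and so concludes with a gap bound of $2s$ rather than $k$) --- but the decomposition and the concentration-plus-union argument are identical.
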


\begin{proof}
Since we pick our sample $S$ independently and uniformly at random, some nodes of $Y$ may be picked more than once, and each vertex will be picked with probability $p = \frac{m}{ |Y|} =  \frac{C_1 \cdot  \sqrt{ |V|}}{ |Y|} $. Let $P$ be the set of vertices of an arbitrary path among these $c$ paths. Divide $P$ into consecutive sections of size, $s = \frac{|Y| \log |V|}{ \sqrt {|V|}}$. The last section on $P$ can have any length from $1$ to $\frac{|Y| \log |V| }{ \sqrt {|V|}}$. Let $R$ be the set of vertices of an arbitrary section of path $P$ (any section except the last one). We have that expected size of $|R \cap S|$, $E[|R \cap S|]$ = $ s \cdot p = C_1 \log |V|$. Since we pick our sample independently, using standard Chernoff bound for any constant $C_1  > 8 \ln 2$, we have that $Pr[|R \cap S|=0]<1/|V|^4$. Using a union bound, with probability at least $1 - c/|V|^3$, our sample $S$ will pick at least one node from all sections except the last section of all paths. Therefore, if $c \leq |V|$, with probability at least $1 - \frac{1}{|V|^2}$, the distance between any two consecutive nodes on a path in our sample is at most $2s$.
\end{proof}

\begin{lemma}\label{lem:root_complexity}
Let $G=(V,E)$ be a DAG, and suppose that roots have at most $c \in O(n^{1/2 - \epsilon})$ for constant $0 < \epsilon < 1/2$ paths (not necessarily disjoint) in total to vertex $v$, then, \textsf{learn-root}$(V)$
outputs a root with probability at least $1-\frac{1}{|V|}$, with $Q(n) \in O(n)$ and $R(n) \in O(1)$.
\end{lemma}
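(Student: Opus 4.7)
The plan is to maintain the invariant that the set $Y$ is always the ancestor set $A(z)$ of some vertex $z$ that is itself an ancestor of $v$, and then to show that the while-loop shrinks $|Y|$ so rapidly that the final pairwise scan must return a root. The invariant holds initially with $z=v$, and when the loop replaces $Y$ by the ancestors of a newly chosen $y\in Y$, we have $y\in A(z)$, so $A(y)\subseteq A(z)=Y$ and the update sets the new $Y$ to $A(y)$; the invariant is therefore preserved by taking the new $z$ to be $y$.

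The hardest step I expect is bounding the per-iteration shrinkage of $|Y|$. At the start of an iteration, $Y=A(z)$ for an ancestor $z$ of $v$. Since $z$ has at least one directed path to $v$, any root-to-$z$ path composed with a fixed $z$-to-$v$ path yields a distinct root-to-$v$ path, so the total number of root-to-$z$ paths is at most $c$, and $Y$ is covered by these $c$ directed paths. I would then condition on the $1-1/|V|^2$ event guaranteed by Lemma~\ref{lem:elements_scattered}: on every such path $P$, every section of length $s=O(|Y|\log|V|/\sqrt{|V|})$ contains a sample of $S$, so the first sample $p_1^P$ on $P$ lies within distance $s$ of the start of $P$. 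By the minimality of $y\in S$, if any ancestor $u$ of $y$ lay on $P$ at or after $p_1^P$, then $p_1^P$ would be an ancestor of $u$ along $P$ and hence of $y$, contradicting the choice of $y$. Thus $|A(y)\cap P|\leq s$, and summing over the $c$ paths gives
\[
|A(y)| \;\leq\; cs \;=\; O\!\left(\frac{c\,|Y|\log n}{\sqrt{n}}\right) \;=\; O\!\left(\frac{|Y|\log n}{n^{\epsilon}}\right).
\]

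For $n$ sufficiently large, the shrinkage factor $(\log n)/n^{\epsilon}$ is at most $1/n^{\epsilon/2}$, so starting from $|Y|\leq n$ the while-loop terminates within $O(1/\epsilon)=O(1)$ iterations, at which point $|Y|\leq m=C_1\sqrt{|V|}$. The final pairwise scan returns a minimal $y^*\in Y$; if $y^*$ had a parent $u$, then $u$ would be an ancestor of $z$ and hence in $Y$ with $\textsf{path}(u,y^*)=1$, contradicting minimality, so $y^*$ is a root. Each iteration performs $|S|^2+|Y|=O(n)$ path queries in $O(1)$ parallel rounds, and combined with the $O(|V|)$ initial ancestor scan and the final $O(m^2)=O(n)$ pairwise scan this yields $Q(n)\in O(n)$ and $R(n)\in O(1)$. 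A union bound over the $O(1)$ invocations of Lemma~\ref{lem:elements_scattered} gives a total failure probability of $O(1/|V|^2)\leq 1/|V|$, completing the argument.
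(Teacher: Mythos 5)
Your proof is correct and follows essentially the same route as the paper: applying Lemma~\ref{lem:elements_scattered} once per iteration to bound the multiplicative shrinkage of $Y$ by $O(c\log|V|/\sqrt{|V|})=O(\log n/n^{\epsilon})$, concluding the \textbf{while} loop runs $O(1/\epsilon)=O(1)$ times, then tallying the same $O(n)$-per-round costs. You additionally spell out what the paper's proof leaves implicit — the invariant that $Y=A(z)$ for some ancestor $z$ of $v$, the observation that there are still at most $c$ root-to-$z$ paths covering $Y$, and the ``first sample on each path'' argument showing $|A(y)|\le cs$ — which is exactly the reasoning needed to make the paper's invocation of Lemma~\ref{lem:elements_scattered} rigorous.
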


\begin{proof}
The correctness of the \textsf{learn-root} method relies on the
fact that if $Y$ is a set of ancestors of vertex $v$, then for
vertex $r$, a root of the network, and for all $y \in Y$, we have:
$path(y,r)=0$. Using Lemma~\ref{lem:elements_scattered} and a union
bound, after at most $1 / \epsilon$ iterations of the \textbf{While} loop,
with probability at least $1 - \frac{1 / \epsilon}{|V|^2}$, the size of $|Y|$
will be $O(m)$. Hence, we will be able to find a root using the
queries performed in Line~\ref{line:root-brute-force-comparison}.
Note that this Las Vegas algorithm always returns a root correctly.
We can simply derive a Monte Carlo algorithm by replacing the
\textbf{while} loop with a \textbf{for} loop of two iterations.

Therefore, the query complexity of the algorithm is as follows w.h.p:
\begin{itemize}
    \item We have $O(|V|)$ queries in $1$ round to find ancestors of $v$.
    \item Then, we have $1 / \epsilon$ iterations of the \textbf{while} loop, each having $O(m^2) + O(|Y|) \in O(|V|)$ queries in $1 / \epsilon$ rounds.
    \item Finally,  we have $O(m^2)$ queries performed in $1$ round in Line~\ref{line:root-brute-force-comparison}.
\end{itemize}
Overall, this amounts to $Q(n) \in O(n)$, $R(n) \in O(1)$ w.h.p.

\end{proof}

Since in a multitree with $a \in O(n^{1/2 - \epsilon})$ roots (for $0 < \epsilon < 1/2$), each root has at most one path to a given vertex $v$, we have at most $a \in O(n^{1/2 - \epsilon})$ directed paths in total from roots to an arbitrary vertex $v$. Therefore, we can apply Lemma~\ref{lem:root_complexity} to learn a root w.h.p. Note that if $a \notin O(n^{1/2 - \epsilon})$, as an alternative, we can learn a root w.h.p. using $O(n \log n)$ queries with $R(n) \in O(\log n)$ rounds by (i) picking an arbitrary vertex $v \in V$ and learning its ancestors, $A(v) \cap V$ in parallel (ii) replacing \textsf{path} queries with \textsf{inverse-path} queries (\textsf{inverse-path}$(u,v)=1$ if and only if $v$ has a directed path to $u$), (ii) and applying the rooted tree learning method by Afshar {\it et al.}~\cite[Algorithm~2]{DBLP:conf/esa/AfsharGMO20} to learn the tree with inverse direction to $v$. Note that any of the leaves of the inverse tree rooted at $v$ is a root of the multitree.

Our  multitree learning algorithm works by repetitively learning a root, $r$, from the set of candidate roots, $R$ ($R = V$ at the beginning). Then, it learns a tree rooted at $R$ by calling the rooted tree learning method by Afshar {\it et al.}~\cite[Algorithm~2]{DBLP:conf/esa/AfsharGMO20}. Finally, it removes the set of vertices of the tree from $R$ to perform another iteration of the algorithm so long as $|R|>0$. We give the details of the algorithm below.

\begin{enumerate}
    \item Let $R$ be the set of candidate roots for the multitree initialized with $V$.
    \item Let $r \gets \textsf{learn-root}(R)$. \label{step:learn-root}
    \item Issue queries in parallel, $path(r,v)$ for all $v \in V$ to learn descendants, $D(r)$. 
    \item Learn the tree rooted at $r$ by calling \textsf{learn-rooted-tree}$(r, D(r))$.
    \item Let $R = R \setminus D(r)$, and if $|R|>0$ go to step~\ref{step:learn-root}.
\end{enumerate}

Theorem~\ref{thm:multitree} analyzes the complexity of our multitree learning algorithm.

\begin{theorem}\label{thm:multitree}
One can learn a multitree with $a$ roots using $Q(n) \in O( a n \log n)$ path queries in $R(n) \in O(a \log n)$ parallel rounds w.h.p.
\end{theorem}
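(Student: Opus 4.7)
The plan is to verify correctness of the outer loop and then bound the per-iteration cost, multiplying by the iteration count. First, I would argue that the loop runs at most $a$ times: each iteration removes at least the learned root $r$ from $R$, and since every vertex of $V$ is reachable from some root in $G$, every vertex is eventually removed as a descendant of some learned root; hence after at most $a$ iterations, $R$ becomes empty and the algorithm terminates.

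For per-iteration correctness, I would show that a minimal element, $u$, of the ancestor set of an arbitrary $v \in R$ restricted to $R$ is always a genuine root of $G$. Suppose for contradiction that $u$ has a parent $w$ in $G$. If $w \in R$, then $w$ is also in $v$'s ancestor set within $R$ and $\mathsf{path}(w,u)=1$, contradicting the minimality of $u$. If $w \notin R$, then $w$ was removed in some earlier iteration as a descendant of a previously-processed root $r'$, which would make $u$ also a descendant of $r'$; but then $u$ would have been in $D(r')$ and already removed from $R$---a contradiction. Once such a root $r$ is returned, the subgraph reachable from $r$ forms a tree by the multitree definition, so \textsf{learn-rooted-tree}$(r, D(r))$ correctly reconstructs it using Afshar \textit{et al.}'s method.

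Next, I would account for the per-iteration cost. The \textsf{learn-root} step costs $O(n \log n)$ queries in $O(\log n)$ rounds w.h.p.\ in either regime: if $a \in O(n^{1/2 - \epsilon})$, Lemma~\ref{lem:root_complexity} applies directly since each surviving root contributes at most one path to $v$ (giving $O(n)$ in $O(1)$ rounds); otherwise the inverse-path-queries alternative described right after Lemma~\ref{lem:root_complexity} delivers $O(n \log n)$ in $O(\log n)$ rounds. Learning $D(r)$ takes $O(n)$ queries in one round, and \textsf{learn-rooted-tree}$(r, D(r))$ takes $O(n \log n)$ queries in $O(\log n)$ rounds w.h.p. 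Summing over the at most $a$ iterations gives $Q(n) \in O(a n \log n)$ and $R(n) \in O(a \log n)$, and a union bound over the $O(a)$ high-probability events preserves the w.h.p.\ guarantee.

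The main obstacle is the correctness of \textsf{learn-root} on the shrinking candidate set $R$, because path queries are performed in the original graph $G$ while minimality is only enforced over surviving vertices. The invariant sketched above---that every vertex removed from $R$ is already a descendant of an earlier-learned root---is what bridges the gap and rules out the bad case where a non-root of $G$ is reported as minimal. A secondary subtlety is that in a multitree the descendant sets $D(r_1), \ldots, D(r_a)$ may overlap, but this only affects which iteration a given vertex is removed in, not the bound of at most $a$ iterations, so the final complexity argument is unaffected.
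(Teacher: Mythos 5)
Your proof is correct and follows the same high-level plan as the paper's: bound the number of iterations by $a$, charge each iteration with the cost of \textsf{learn-root} plus \textsf{learn-rooted-tree}, and close with a union bound. The paper's own proof consists essentially of that complexity accounting alone; it does not address why \textsf{learn-root} applied to the shrinking candidate set $R$ still returns a genuine root of $G$, given that path queries are answered against the full graph rather than the induced subgraph on $R$. Your invariant---that any vertex ever removed from $R$ is a descendant of a previously learned root, so a parent $w$ of a putative minimal $u$ cannot lie outside $R$ without forcing $u$ out of $R$ as well---is precisely the missing correctness argument, and is a worthwhile tightening rather than a divergence from the paper.
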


\begin{proof}
The query complexity and the round complexity of our multitree learning method is dominated by the calls to the \textsf{learn-rooted-tree} by Afshar {\it et al.}~\cite[Algorithm~2]{DBLP:conf/esa/AfsharGMO20} which takes $Q(n) \in O(n \log n)$ queries  in $R(n) \in O(\log n)$ parallel rounds w.h.p. Hence, using a union bound and by adjusting the sampling constants for \textsf{learn-rooted-tree} by Afshar {\it et al.}~\cite[Algorithm~2]{DBLP:conf/esa/AfsharGMO20} we can establish the high probability bounds.
\end{proof}

\subsection{Butterfly Networks}
Next, we provide an algorithm to learn a butterfly network.
Suppose that $F^h = (V, E)$ is a butterfly network  with height $h$ 
(i.e.,  a depth-$h$ FFT graph, see definition~\ref{def:butterfly}). We show that we can learn $F^h$ using $Q(n) \in O(2^{3h/2} h^2)$ path queries with high probability. Note that in a butterfly networks of height $h$, the number of nodes will be $n = 2^h \cdot (h+1)$.
Also, note that the graph has a symmetry property, that is, all leaves are reachable from the root, and all roots are reachable from the leaves if we reverse the directions of the edges, and that each node but the leaves has exactly two children, and each node but the roots have exactly two parents, and so on.  Due to this symmetry property, we can apply \textsf{learn-short-tree} but with inverse path query (\textsf{inverse-path}$(u,v)=1$ if and only if $v$ has a directed path to $u$) to find the tree with inverse direction to a leaf.

Our algorithm first learns all the roots and all the leaves of the graph.
 We first perform a sequential search to find an arbitrary root of the network, $r$. Note that we can learn $r$ by picking an arbitrary vertex $x$ and looping over all the vertices and updating $x$ to $y$ if \textsf{path}$(y,x)=1$. After learning its descendants, $D(r)$, we make a call to our \textsf{learn-short-tree} method to build the tree rooted at $r$, which enables us to learn all the leaves, $L$. Then, we pick an arbitrary leaf, $l \in L$, and after learning its ancestors, $A(l)$, we call the \textsf{learn-short-tree} method (with inverse path query) to learn the tree with inverse direction to $l$, which enables us to learn all the roots, $R$. We then take two sample subsets, $S$, and $T$, of expected size $O(2^{h/2} h)$ from $R$, and $L$ respectively, and uniformly at random. We will show that the union of the edges of trees rooted at $r$ for all $r \in S$ and the inverse trees rooted at $l$ for all $l \in T$ includes all the edges of the network w.h.p. We give the details of our algorithm below.

\begin{enumerate}
    \item Learn a root, $r$, using a sequential search.
    \item Perform path queries to learn descendant set, $D(r)$, of $r$.
    \item Call \textsf{learn-short-tree}$(r, D(r))$ method to learn the leaves of the network, $L$.
    \item Let $l \in L$ be an arbitrary leaf in the network, then perform path queries to learn the ancestors of $l$, $A(l)$.
    \item Call \textsf{learn-short-tree}$(l, A(l))$ with inverse path query definition to learn the roots of the network, $R$.
    \item Pick a sample $S$ of size $c \cdot 2^{h/2} h$ from $R$, and a sample $T$ of size $c \cdot 2^{h/2} h$ from $L$ uniformly at random for a constant $c>0$.
    \item Perform queries to learn descendant set, $D(s)$, for every $s \in S$, and to learn ancestor set $A(t)$, for every $t \in T$.
    \item Call \textsf{learn-short-tree}$(s, D(s))$ to learn the tree rooted at $s$ for all $s \in S$.
    \item Call \textsf{learn-short-tree}$(t, A(t))$ using inverse reverse path query to learn the tree rooted at $t$ for all $t \in T$.
    \item Return the union of all the edges learned.

\end{enumerate}

\begin{theorem}
One can learn a butterfly network of height, $h$, using $Q(n) \in O(2^{3h/2} h^2)$ path queries with high probability.
\end{theorem}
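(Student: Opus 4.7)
The plan is to split the argument into a query-complexity bound on the ten-step procedure and a coverage argument showing that, with high probability, the union of the tree-reconstructions performed is exactly the edge set of $F^h$. The key structural facts I will rely on are that $n = 2^h(h+1)$, that the subgraph reachable from any single root (respectively, reaching any single leaf) is a complete binary tree of height $h$ with $2^{h+1}-1$ vertices, and that each vertex at layer $k$ has exactly $2^k$ ancestor roots and $2^{h-k}$ descendant leaves (by the symmetry noted in the paper).

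For the query complexity, I would bound each step in terms of $2^h$ and $h$. Learning an arbitrary root sequentially costs $O(n) = O(2^h h)$ queries, as does learning $D(r)$. Since $D(r)$ induces a tree of height $h$ on $O(2^h)$ vertices, the call to \textsf{learn-short-tree}$(r,D(r))$ costs $O(2^h \cdot h)$ queries by the first theorem of Section~\ref{subsec:directed-trees}; the symmetric computation starting from an arbitrary leaf has the same cost. The dominant contribution comes from steps 7--9: for each of the $|S| = O(2^{h/2}h)$ sampled roots we pay $O(n) = O(2^h h)$ to learn its descendants and $O(2^h h)$ to reconstruct its tree, and symmetrically for the $|T|=O(2^{h/2}h)$ sampled leaves. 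Summing gives $O\!\left(2^{h/2}h\cdot 2^h h\right) = O(2^{3h/2}h^2)$ as claimed.

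The correctness argument is the nontrivial part. Fix an edge $(u,v)$ of $F^h$ with $u$ at layer $k-1$ and $v$ at layer $k$. Because the subgraph reachable from a root $r$ is a binary tree, edge $(u,v)$ lies in that tree if and only if $u$ is reachable from $r$, equivalently $r$ is one of the $2^{k-1}$ root-ancestors of $u$. Symmetrically, $(u,v)$ lies in the inverse-tree rooted at a leaf $t$ iff $t$ is one of the $2^{h-k}$ leaf-descendants of $v$. So the edge fails to be learned only if $S$ misses all $2^{k-1}$ ancestor roots of $u$ \emph{and} $T$ misses all $2^{h-k}$ descendant leaves of $v$. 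Using the sample sizes $|S|=|T|=c\cdot 2^{h/2}h$ and the inequality $(1-1/2^h)^{|S|}\le e^{-|S|/2^h}$, the product of the two miss-probabilities is at most
\[
\exp\!\Bigl(-ch\bigl(2^{k-1-h/2}+2^{h/2-k}\bigr)\Bigr).
\]
By AM--GM the bracketed sum is at least $\sqrt{2}$ for every $k$, so this product is at most $e^{-\sqrt{2}\,c h}$, independently of where the edge sits in the network.

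The main obstacle, and the real content of the argument, is ensuring that this per-edge bound withstands a union bound over all edges. The butterfly has $O(2^h h) = O(n)$ edges, so the overall failure probability is at most $O(n)\cdot e^{-\sqrt{2}\,c h}$. Since $h = \Theta(\log n)$, choosing $c$ large enough (depending on the desired exponent $c'$ in the high-probability bound) makes this at most $n^{-c'}$, completing the correctness proof. The query bound follows from the per-step counts above, yielding the claimed $Q(n) \in O(2^{3h/2} h^2)$ with high probability.
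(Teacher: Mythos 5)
Your proof is correct and follows essentially the same approach as the paper: bound the per-step query cost of the ten-step procedure (dominated by $|S|+|T| = O(2^{h/2}h)$ calls to \textsf{learn-short-tree}, each $O(2^h h)$), then show every edge is covered by some sampled tree via a Chernoff-plus-union-bound argument. The one point of divergence is cosmetic but worth noting: the paper splits into cases $k \le h/2$ and $k > h/2$, in each case using only \emph{one} of the two samples to show the relevant ancestor-root or descendant-leaf set is hit with probability $\ge 1 - e^{-ch/4}$; you instead multiply the two independent miss-probabilities and observe via AM--GM that $2^{k-1-h/2}+2^{h/2-k}\ge\sqrt{2}$ for every layer $k$, yielding a uniform, case-free per-edge bound $e^{-\sqrt{2}\,ch}$ — a slightly tighter presentation of the same idea rather than a genuinely different route.
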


\begin{proof}
The query complexity of the algorithm is dominated by $O(2^{h/2} h)$ 
times the running time of our \textsf{learn-short-tree} method,
which takes $O(2^h h)$ queries for each tree. Consider a directed edge
from vertex $x$ at height $k$ to vertex $y$ at height $k-1$ in the
network. If $k \leq h/2$, then $x$ has at least $2^{\lfloor{h/2}\rfloor}$
ancestors in the root, that is, $|A(x) \cap R| \geq 2^{\lfloor h/2 \rfloor}$. 
Since our sample, $S$, has an expected size of $2^{h/2} \cdot ch$, 
the expected size of $|S \cap A(x) \cap R| \geq ch/2$.
Using a standard Chernoff bound, the probability,  
$Pr[|S \cap A(x) \cap R|=0] \leq e^{-ch/4}$. Hence, for large enough $c$, this
probability is less than $1 / 2^{2h}$. Therefore, we will be able
to learn edge $(x, y)$ through a tree rooted at $s \in S$. Similarly,
we can show that if $k> h/2$,  then $y$ has at least 
$2^{\lfloor h/2\rfloor}$ descendants in the leaves, that is,
$|D(y) \cap L| \geq 2^{\lfloor h/2\rfloor }$. Since, our sample $T$, has an expected
size of $2^{h/2} \cdot ch$, the expected size of 
$|T \cap D(y) \cap L| \geq ch/2$. Using a standard Chernoff bound, the probability,
$Pr[|T \cap D(y) \cap L|=0] \leq e^{-ch/4}$. Hence, for large enough
$c$, this probability is less than $1 / 2^{2h}$. Therefore, we will
be able to learn edge $(x,y)$ through a tree inversely rooted at
$t \in T$ in this case. A union bound establishes the
high probability.
\end{proof}

\section{Parallel Learning of Almost-trees}\label{sec:near-trees}
 Let $G = (V, E)$ be an almost-tree of height $h$. We learn $G$ with $Q(n) \in O(n \log{n} + nh)$ path queries in $R(n) \in O(\log {n})$ rounds w.h.p. Note that we can learn the root of an almost-tree by Algorithm~\ref{alg:learn-root}, and given that the root has at most $2$ paths to any vertex, it will take $Q(n) \in O(n)$ queries and $R(n) \in O(1)$ w.h.p. by Lemma~\ref{lem:root_complexity}. We then learn a spanning rooted tree for it, and finally we learn the cross-edge. We will also prove that our algorithm is optimal by showing that any randomized algorithm needs an expected number of $\Omega(n \log{n} + nh)$.

\subsection{Learning an Arborescence in a DAG}

Our parallel algorithm learns an arborescence, a spanning directed rooted tree, of the graph with a divide and conquer approach based on our separator theorem, which is an extension of Afshar {\it et al.} \cite[Lemma 5]{DBLP:conf/esa/AfsharGMO20} for DAGs.

\begin{theorem}\label{theorem:has-separator}
 Every degree-$d$ rooted DAG, $G = (V, E)$, has an even-separator (see Definition~\ref{def:even-separator}).
\end{theorem}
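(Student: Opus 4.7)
The plan is to construct an even-separator by a greedy descent from the root $r$. Starting at $r$, where $\text{count}(r,V) = n$, I repeatedly step from the current vertex $v$ to the child of $v$ with the largest descendant count. Termination is forced by acyclicity: for a child $c$ of $v$, $\{c\}\cup D(c) \subseteq D(v)$ while $v\notin\{c\}\cup D(c)$, so $\text{count}(c,V) \le \text{count}(v,V) - 1$, and the count strictly decreases along every step. Hence the descent halts at the first vertex $v^*$ with $\text{count}(v^*,V) \le n(d-1)/d$, establishing the upper half of Definition~\ref{def:even-separator}.

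The remaining task is to show $\text{count}(v^*,V) \ge n/d$. Let $p$ denote the predecessor of $v^*$ along the descent, so the stopping rule gives $\text{count}(p,V) > n(d-1)/d$. The essential new ingredient compared with the tree version of Afshar {\it et al.}~\cite[Lemma~5]{DBLP:conf/esa/AfsharGMO20} is that in a DAG, the descendant sets of siblings may overlap; nevertheless, they still cover all proper descendants of $p$, since for the children $c_1,\dots,c_k$ of $p$,
\begin{equation*}
\{p\}\cup D(p) \;=\; \{p\}\cup\bigcup_{i=1}^{k}\bigl(\{c_i\}\cup D(c_i)\bigr),
\end{equation*}
and the union bound yields $\sum_{i=1}^{k}\text{count}(c_i,V) \ge \text{count}(p,V) - 1$. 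Since $p$ has total degree at most $d$, we have $k\le d$, and pigeonhole produces a child with count at least $(\text{count}(p,V)-1)/d$; as $v^*$ is chosen to maximize child count, $\text{count}(v^*,V)$ inherits this bound.

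The main obstacle will be extracting the clean lower bound $\text{count}(v^*,V) \ge n/d$, since direct substitution yields a value that falls a small amount short. I plan to close this gap by splitting on whether $p$ is the root. When $p \neq r$, the in-degree constraint $d_i(p)\ge 1$ implies $k \le d-1$, which tightens the pigeonhole denominator from $d$ to $d-1$ and lifts the resulting bound past $n/d$. When $p = r$, the descent terminates after a single step, and a direct calculation using $\text{count}(r,V)=n$ together with integer rounding in Definition~\ref{def:even-separator} handles the threshold. Combining the two cases establishes the existence of an even-separator.
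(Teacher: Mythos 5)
Your strategy is the same greedy descent the paper uses: start at the root, repeatedly step to the child with the largest descendant count, and stop as soon as the count first drops below the upper threshold $\frac{n(d-1)}{d}$. You are more careful than the paper about the off-by-one term: the paper asserts that each $v\in V$ is a descendant of some child of $r$ and thus concludes directly that some child has $|D(x)|\geq |V|/d$, whereas you correctly observe that $p\notin\bigcup_i D(c_i)$, so the union bound only gives $\sum_i count(c_i,V)\geq count(p,V)-1$, and you then try to absorb the lost unit by splitting on whether the predecessor $p$ equals $r$.

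That case split, however, does not actually close the gap you identified. When $p=r$, pigeonhole over $\leq d$ children gives only $count(v^*,V)\geq\lceil (n-1)/d\rceil$, which falls below $n/d$ whenever $n\equiv 1\pmod d$. When $p\neq r$, dividing by $d-1$ yields $count(v^*,V)>n/d-1/(d-1)$, which integer rounding still cannot push up to $n/d$ in general. In fact the literal statement of Theorem~\ref{theorem:has-separator} is too tight for small $d$: for $d=2$, a rooted path on $5$ vertices has $count$ values $5,4,3,2,1$, while Definition~\ref{def:even-separator} demands $count(v,V)\in[2.5,2.5]$, so no even-separator exists at all. The paper's own proof silently suffers from the identical imprecision — its chain of inequalities (``$D(x)\geq |V|/d$'', ``$|D(y)|\geq |V|/d$'') also quietly drops the $-1$ — and the downstream algorithms are protected only because they check the looser near-separator interval $[\frac{|V|}{d+2},\frac{|V|(d+1)}{d+2}]$ rather than the even-separator one. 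So your approach matches the paper's and your bookkeeping is better, but your plan to ``lift the resulting bound past $n/d$'' cannot succeed without weakening the threshold (e.g.\ to $\lfloor n/d\rfloor\leq count(v,V)\leq\lceil n(d-1)/d\rceil$), which is the fix the statement actually needs.
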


\begin{proof}
We prove through a iterative process that there exists a vertex $v$ such that $ \frac{|V|}{d} \leq |D(v)| \leq \frac{|V| \cdot (d-1)}{d}$. Let $r$ be the root of the DAG. We have that $|D(r)| = |V|$. Since $r$ has at most $d$ children and each $v \in V$ is a descendent of at least one of the children of $r$, $r$ has a child $x$, such that $D(x) \geq |V|/d$. If $D(x) \leq \frac{|V| \cdot (d-1)}{d}$, $x$ is an even-separator. Otherwise, since $d_o(x) \leq d-1$, $x$ has a child, $y$, such that $|D(y)| \geq |V|/d$. If $|D(y)| \leq \frac{|V| \cdot (d-1)}{d}$, $y$ is an even-separator. Otherwise, we can repeat this iterative procedure with a child of $y$ having maximum number of descendants. Since, $|D(y)| < |D(x)|$, and a directed path in a DAG ends at vertices of out-degree $0$ (with no descendants), this iterative procedure will return an even-separator at some point.
\end{proof}

Next, we introduce Lemma~\ref{lem:exist-separator} which shows that for fixed-degree rooted DAGs, if we pick a vertex $v$ uniformly at random, there is an even separator in $A(v)$, ancestor set of $v$, with probability depending on $d$.

\begin{lemma}\label{lem:exist-separator}
    Let $G = (V, E)$ be a degree-$d$ DAG with root $r$, and let $v$ be a vertex chosen uniformly at random from $v$. Let $Y$ be the ancestor set for $v$ in $V$. Then, with probability at least $\frac{1}{d}$, there is an even-separator in $Y$.
\end{lemma}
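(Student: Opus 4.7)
The plan is to derive the lemma as a direct sampling corollary of Theorem~\ref{theorem:has-separator}. The key observation is that the statement ``an even-separator lies in $Y = A(v)$'' is equivalent to the statement ``the sampled vertex $v$ lies in $D(s)$ for some even-separator $s$,'' by the definitional symmetry between ancestors and descendants. Since Theorem~\ref{theorem:has-separator} already guarantees the existence of an even-separator, the whole argument reduces to estimating the size of $D(s)$ for a fixed such $s$.

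Concretely, I would first invoke Theorem~\ref{theorem:has-separator} to fix a particular even-separator $s \in V$, which by Definition~\ref{def:even-separator} satisfies $|D(s)| \geq |V|/d$. Next, since $v$ is drawn uniformly at random from $V$, I would write
\[
\Pr[\,s \in Y\,] \;=\; \Pr[\,v \in D(s)\,] \;=\; \frac{|D(s)|}{|V|} \;\geq\; \frac{1}{d},
\]
using the equivalence $s \in A(v) \iff v \in D(s)$ coming from the definition of ancestor/descendant as transitive closures of parent/child. Because the event ``$s \in Y$'' implies ``there exists an even-separator in $Y$,'' the lemma follows.

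I do not anticipate a genuine obstacle; the only point requiring care is the boundary convention on whether $D(\cdot)$ and $A(\cdot)$ include the vertex itself. In the proof of Theorem~\ref{theorem:has-separator} the author writes $|D(r)| = |V|$, which fixes the convention that $v \in D(v)$ and symmetrically $v \in A(v)$. Under this convention the argument above handles the corner case $v = s$ automatically (the separator $s$ appears in its own ancestor set), and no additive loss on the probability is incurred. Hence Lemma~\ref{lem:exist-separator} is essentially a one-line consequence of Theorem~\ref{theorem:has-separator} plus the uniform sampling inequality.
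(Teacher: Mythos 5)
Your proposal matches the paper's proof essentially verbatim: both fix an even-separator $e$ via Theorem~\ref{theorem:has-separator}, use $|D(e)| \geq |V|/d$, and note that a uniformly random $v$ lands in $D(e)$ with probability at least $1/d$, at which point $e \in A(v) = Y$. Your extra remark about the convention $v \in D(v)$ (so $count(v,V)=|D(v)|$ and the corner case $v=e$ is handled) is a reasonable bit of care the paper leaves implicit, but the argument is the same.
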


\begin{proof}
By Theorem~\ref{theorem:has-separator}, $G$ has an even-separator, $e$. Since $|D(e)| \geq \frac{|V|}{d}$, with probability at least $\frac{1}{d}$, $v$ will be one of the descendants of $e$.

\end{proof}

Although a degree-$d$ rooted DAG has an even-separator, checking if a vertex is an even-separator requires a lot of queries for exact calculation of the number of descendants. Thus, we use a more relaxed version of the separator, which we call \emph{near-separator}, for our divide and conquer algorithm.

\begin{definition}\label{def:near-separator}
Let $G = (V, E)$ be a rooted degree-$d$ DAG. We say that vertex $v \in V$ is a near-separator if
$ \frac{|V|}{d+2} \leq
|D(v)|  \leq \frac{|V|  (d+1)}{d+2} $.

\end{definition}

Note that every even-separator is also a near-separator. We show if an even-separator exists among $A(v)$ for an arbitrary vertex $v$, then we can locate a near-separator among $A(v)$ w.h.p. Incidentally, Afshar {\it et al.}~\cite{DBLP:conf/esa/AfsharGMO20} used a similar divide and conquer approach to learn directed rooted trees, but their approach relied on the fact that there is exactly one path from root to every vertex of the tree. We will show how to meet the challenge of having multiple paths to a vertex from the root in learning an arborescence for a rooted DAG.

Our \textsf{learn-spanning-tree} method takes as input vertex set, $V$, of a DAG rooted at $r$, and returns the edges, $E$, of an arborescence of it. In particular, it enters a repeating \textbf{while} loop to learn a near-separator by (i) picking a random vertex $v \in V$, (ii) learning its ancestors, $Y = A(v) \cap V$, (iii) and checking if $Y$ has a near-separator, $w$, by calling \textsf{learn-separator} method, which we describe next.
Once \textsf{learn-separator} returns a vertex, $w$, we split $V$ into $V_1 = D(w) \cap V$ and $V_2  = V \setminus V_1$ given that $path(w,z)=1$ if and only if $z \in V_1$. If $ \frac{|V|}{d}  \leq |V_1| \leq  \frac{|V| (d-1) }{d}$, we verify $w$ is a near-separator. If $w$ is a near separator, then it calls \textsf{learn-parent} method, to learn a parent, $u$, for $w$.
Finally, it makes two recursive calls to learn a spanning tree rooted at $w$ for vertex set $V_1$, and a spanning tree rooted at $r$  with vertex set $V_2$ (see Algorithm~\ref{alg:spanning-tree} for details). Note that as \textsf{learn-parent} method is similar to the \textsf{learn-root} method, we explain it in Appendix~\ref{app:alg-details} (see Algorithm~\ref{alg:learn-parent}).

Next, we show how to adapt an algorithm to learn a near-separator for DAGs by extending the work of Afshar {\it{et al.}}\cite[Aglrorithm 3]{DBLP:conf/esa/AfsharGMO20}.
Our \textsf{learn-separator} method takes as input vertex $v$, its ancestors, $Y$, vertex set $V$ of a DAG rooted at $r$, and returns w.h.p. a near-separator among vertices of $Y$ provided that there is an even-separator in $Y$. If $|Y|$ is too large ($|Y| > |V|/K$), then it enters \textbf{Phase 1}. The goal of this phase is to remove the nodes that are unlikely to be a separator in order to pass a smaller set of candidate separator to \textbf{Phase 2}.
It chooses a random sample, $S$, of expected size $m = C_1 \sqrt{|V|}$, where $C_1>0$ is a fixed constant. It adds $\{v, r\}$ to the sample $S$. It then estimates $|D(s) \cap V|$ for each $s \in S$, using a random sample, $X_s$, of size $K = O(\log |V|)$ from $V$ by issuing path queries.
If all of the estimates, $count(s, X_s)$, are smaller than $\frac{K}{d+1}$, we return $\mathit{Null}$, as we argue that in this case the nodes in $Y$ do not have enough descendants to act as a separator.
Similarly, If all of the estimates, are greater than $\frac{K  d}{d+1}$, we return $\mathit{Null}$, as we show that in this case the nodes in $Y$ have too many descendants to act as a separator.
If one of these estimates for a vertex $s$ lies in the range of $[\frac{K}{d+1} , \frac{K d}{d+1}]$, we return it as a near-separator.
Otherwise, we filter the set of $Y$ by removing the nodes that are unlikely to be a separator through a call to \textsf{filter-separator} method, which we present next.
Then, we enter \textbf{Phase 2}, where for every $s \in Y$, we take a random sample $X_s$ of expected size of $O(log |V|)$ from $V$ to estimate $|D(s) \cap V|$. If one of these estimates for a vertex $s$ lies in the range of $[\frac{K}{d+1} , \frac{K d}{d+1}]$, we return it as a near-separator. We will show later that the output is a near-separator w.h.p (see Algorithm~\ref{alg:learn-separator}).

Next, let us explain our \textsf{filter-separator} method, whose purpose is to remove some of the vertices in $Y$ that are unlikely to be a separator to shrink the size of $Y$. We first establish a partial order on elements of $S$ by issuing path queries in parallel. Since there are at most $c =2$ directed paths from root to vertex $v$, for path $1 \leq i \leq c$, let $l_i \in S$ be the oldest node on path $i$ having $count(l_i, X_{l_i}) < \frac{K}{d+1}$ (resp. $g_i \in S$ be the youngest node on path $i$ having $count(g_i, X_{g_i})  > \frac{K d }{d+1}$).
We then perform queries to remove ancestors of $g_i$, and descendants of $l_i$ from $Y$. We will prove later that this filter reduce $|Y|$ considerably without filtering an even-separator. We will give the details of this method in Algorithm~\ref{alg:filter-separator}.

\setlength{\textfloatsep}{10pt}

\begin{algorithm}[t!]
\caption{Filter out the vertices unlikely to be a separator}
\label{alg:filter-separator}

  \SetKwFunction{FMain}{\textsf{filter-separator}}
  \SetKwProg{Fn}{Function}{:}{}
  \SetAlgoNoLine
   \nonl \Fn{\FMain{$S, Y, V$}}{
\DontPrintSemicolon

         \ForPar{each $\{a,b\} \in S$}{ \
            perform query $path(a,b)$
        }

        Let $P_1, P_2, \ldots, P_c$ be the $c$ paths from $r$ to $v$.

        For $1 \leq i \leq c:$ let $l_i \in (S \cap P_i)$ such that $count(l_i, X_{l_i}) < \frac{K}{d+1}$, and there exists no $b \in (S \cap A(l_i))$ where $count(b, X_b) < \frac{K}{d+1}$.

        For $1 \leq i \leq c:$ let $g_i \in (S \cap P_i)$ such that $count(g_i, X_{g_i}) > \frac{K \cdot d}{d+1}$, and there exists no $b \in (S \cap D(g_i))$ where $count(b, X_b) > \frac{K \cdot d}{d+1}$.

        \ForPar{ $1 \leq i \leq c$ and $v \in V$}{
            perform query $path(v, g_i)$ to find $(A(g_i) \cap V)$.

            Remove $(A(g_i) \cap V)$ from $Y$.

            perform query $path(l_i, v)$ to find $(D(l_i) \cap V)$.

            Remove $(D(l_i) \cap V)$ from $Y$.
        }

    }
    \KwRet $Y$

  \end{algorithm}

Lemma~\ref{lem:filter-separator} shows that our \textsf{filter-separator} efficiently in parallel eliminates the nodes that are unlikely to act as a separator. 

\begin{lemma}\label{lem:filter-separator}
Let $G = (V, E)$ be a DAG rooted at $r$, with at most $c$ directed (not necessarily disjoint) paths from $r$ to vertex $v$, and let $Y = A(v) \cap V$, and let $S$ be a random sample of expected size $m$ that includes $v$, and $r$ as well. The call to \textsf{filter-separator}$(S, Y, V)$ in line~\ref{line:call-filter} of our \textsf{learn-separator} method returns a set of size $O(c \cdot |Y| \log |V| / \sqrt{|V|})$, and If $Y$ has an even-separator, the returned set includes an even-separator with probability at least $1 - \frac{|S| + 1}{|V|^2}$.
\end{lemma}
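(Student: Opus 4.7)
The plan is to split the claim into a size bound and a separator-preservation bound and combine them by a union bound over a Chernoff event for the per-sample estimates and a single event from Lemma~\ref{lem:elements_scattered}.

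I first set up monotonicity along each path: for $a,b$ on a directed path with $a$ an ancestor of $b$, $D(b)\cap V\subseteq D(a)\cap V$, so the true descendant fractions $|D(\cdot)\cap V|/|V|$ are non-increasing along each $P_i$. Because \textsf{learn-separator} invoked \textsf{filter-separator}, no sample produced an estimate in $[K/(d+1),\,Kd/(d+1)]$; together with monotonicity, the samples in $S\cap P_i$ partition into a prefix of \emph{high} samples (estimate $>Kd/(d+1)$) and a suffix of \emph{low} samples (estimate $<K/(d+1)$). Hence $g_i$ (if it exists) is the immediate predecessor of $l_i$ (if it exists) in $S\cap P_i$, and after filtering the only survivors on $P_i$ are vertices lying between $g_i$ and $l_i$ on the path. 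The inclusion of $r\in S$ guarantees $g_i$ exists on every $P_i$, and I will argue that whenever an even-separator lies in $Y$ the true fraction at $v$ is at most $(d-1)/d$, so (under the classification event below) $l_i$ also exists on every $P_i$.

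For the size bound I apply Lemma~\ref{lem:elements_scattered} to the $c$ paths from $r$ to $v$, whose union is $Y$: with probability at least $1-1/|V|^2$, every pair of consecutive samples in $S$ on each $P_i$ is within $O(|Y|\log|V|/\sqrt{|V|})$ on the path. Since $g_i$ and $l_i$ are consecutive in $S\cap P_i$, at most $O(|Y|\log|V|/\sqrt{|V|})$ vertices of $P_i$ survive, and summing over the $c$ paths yields the claimed $O(c\,|Y|\log|V|/\sqrt{|V|})$ bound.

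For separator preservation I apply a Chernoff bound to each estimate $count(s,X_s)$. With $K=\Theta(\log|V|)$ and an appropriate constant, for any fixed $s$ the event $\bigl|count(s,X_s)/K-|D(s)\cap V|/|V|\bigr|>1/(2d(d+1))$ has probability at most $1/|V|^3$, so by a union bound the ``all estimates accurate'' event fails with probability at most $|S|/|V|^3\leq|S|/|V|^2$. Under this event, an even-separator $e\in Y$ has true fraction in $[1/d,(d-1)/d]$; every proper ancestor of $e$ has true fraction $\geq 1/d$ and estimate $\geq K/(d+1)$, so is not classified as low; every proper descendant of $e$ has true fraction $\leq(d-1)/d$ and estimate $\leq Kd/(d+1)$, so is not classified as high. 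Consequently, on every path $P_i$ containing $e$, the chosen $g_i$ is an ancestor of $e$ (or equals $e$) and $l_i$ is a descendant of $e$ (or equals $e$), so $e$ is filtered neither as an ancestor of any $g_j$ nor as a descendant of any $l_j$, and therefore survives. A final union bound with the Lemma~\ref{lem:elements_scattered} event gives the claimed probability $1-(|S|+1)/|V|^2$.

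The most delicate point will be coordinating the filtering across all $c$ paths simultaneously, since an even-separator may lie on several paths and has to survive on each of them, and a vertex filtered via one path cannot be rescued by another. The saving is that misclassification is vertex-local---it depends only on $X_s$ for the offending $s\in S$---so a single union bound over $|S|$ suffices, and the constant Chernoff slack $d/(d+1)-(d-1)/d=1/(d(d+1))$ makes a constant-factor blow-up of $K$ sufficient for the target per-sample failure probability $1/|V|^3$.
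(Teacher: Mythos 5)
Your proposal follows essentially the same two-part decomposition as the paper's proof: a size bound from Lemma~\ref{lem:elements_scattered} applied to consecutive samples along each of the $c$ paths, and a separator-preservation argument from Chernoff-type concentration of the estimates $count(s,X_s)$, combined by a union bound. Your observation that misclassification is vertex-local (depending only on $X_s$) and hence a single union bound over $|S|$ handles all $c$ paths simultaneously is the same saving the paper uses, stated more explicitly.

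One step deserves care. You write that ``no sample produced an estimate in the middle range; together with monotonicity, the samples in $S\cap P_i$ partition into a prefix of high samples and a suffix of low samples.'' This does not follow as stated: the true fractions $|D(\cdot)\cap V|/|V|$ are monotone along $P_i$, but the estimates $count(s,X_s)$ are noisy and need not be, so ``no middle estimate'' plus monotone truth does not by itself force the high/low classification to be a prefix/suffix. The prefix/suffix structure does hold on the accuracy event you invoke later, and the overall probability accounting then works out, but the cleaner route---and the one the paper's proof tacitly relies on---does not need accuracy for the size bound at all. It suffices to argue directly from the definitions: if $g_i$ (youngest high on $P_i$) is an ancestor of $l_i$ (oldest low on $P_i$), then any sample strictly between them would be a descendant of $g_i$ (hence not high) and an ancestor of $l_i$ (hence not low), contradicting that every sample is high or low; if instead $g_i$ is a descendant of $l_i$, every vertex of $P_i$ is an ancestor of $g_i$ or a descendant of $l_i$ and the whole path is removed. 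Either way the surviving stretch of $P_i$ lies between two consecutive members of $S\cap P_i$, and Lemma~\ref{lem:elements_scattered} alone bounds its length. This disentangles the size bound (one event from Lemma~\ref{lem:elements_scattered}, probability $\ge 1-1/|V|^2$) from the preservation bound (a per-sample Chernoff union bound, probability $\ge 1-|S|/|V|^2$), matching the paper's $1-(|S|+1)/|V|^2$.
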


\begin{proof}
See Appendix~\ref{app:arborescence}.
\end{proof}

Lemma~\ref{lem:learn-separator} establishes the fact that our \textsf{learn-separator} finds w.h.p. a near-separator among ancestors $A(v) \cap V$, if there is an even-separator in $A(v) \cap V$.

\begin{lemma}\label{lem:learn-separator}
Let $G = (V, E)$ be a DAG rooted at $r$, with at most $c$ directed (not necessarily disjoint) paths from $r$ to vertex $v$, and let $Y = A(v) \cap V$. If $Y$ has an even-separator, then the Algorithm~\ref{alg:learn-separator} returns a near-separator w.h.p.
\end{lemma}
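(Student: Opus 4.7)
My plan is to combine Chernoff concentration on the sampled descendant-count estimators $count(s, X_s)$ with the filtering guarantee of Lemma~\ref{lem:filter-separator}, then union-bound over the (polynomially many) random events.

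The first step is to establish the estimator's per-vertex guarantee. For any fixed $s$, since $X_s$ is a uniform sample of size $K = \Theta(\log|V|)$ from $V$, the variable $count(s, X_s)$ has expectation $K \cdot |D(s)\cap V|/|V|$, so for a sufficiently large constant hidden inside $K$, a standard Chernoff bound yields, with probability at least $1 - 1/|V|^3$, two facts: (i) if $s$ is an even-separator, then $count(s, X_s) \in [K/(d+1), Kd/(d+1)]$; and (ii) conversely, if the estimate lies in $[K/(d+1), Kd/(d+1)]$, then $|D(s)\cap V| \in [|V|/(d+2), |V|(d+1)/(d+2)]$, i.e., $s$ is a near-separator. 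The two acceptance thresholds are chosen strictly inside the even-separator thresholds (note $(d-1)/d < d/(d+1)$ and $1/d > 1/(d+1)$), which is why a constant slack suffices. Fact~(ii) directly certifies every ``return $s$'' statement in both phases.

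Next I would rule out the two Null branches of Phase~1 under the hypothesis that $Y$ contains an even-separator $w$. Since $r \in S$ and $|D(r)\cap V| = |V|$, concentration gives $count(r, X_r) > Kd/(d+1)$ w.h.p., killing the ``all estimates $< K/(d+1)$'' branch. Since $w \in Y = A(v)\cap V$, the chosen vertex $v$ is a descendant of $w$, hence $|D(v)\cap V| \leq |D(w)\cap V| \leq |V|(d-1)/d$; concentration on $count(v, X_v)$ then places it below $Kd/(d+1)$ w.h.p., killing the ``all estimates $> Kd/(d+1)$'' branch. Therefore, Phase~1 either returns a vertex whose estimate lies in the acceptance interval (a near-separator by (ii)), or it invokes \textsf{filter-separator}; in the latter case, Lemma~\ref{lem:filter-separator} guarantees that with probability at least $1 - (|S|+1)/|V|^2$ the returned set has size $O(c \cdot |Y|\log|V|/\sqrt{|V|})$ and still contains $w$.

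Finally I would close out Phase~2. Whether Phase~2 is reached directly (when $|Y| \leq |V|/K$) or after filtering, at most polynomially many candidates remain, and for each we draw a fresh sample $X_s$. Applying (i) to the surviving $w$ shows $count(w, X_w) \in [K/(d+1), Kd/(d+1)]$, so some vertex is necessarily returned, and applying (ii) to whichever vertex is returned shows it is a near-separator. The main obstacle will be the union-bound bookkeeping: we must aggregate the Chernoff failure events for all sampled candidates (at most $|V|$ of them, each failing with probability $\leq 1/|V|^3$) together with the failure event of Lemma~\ref{lem:filter-separator} and the residual sampling events, so that the total failure probability stays $O(1/|V|)$, which delivers the claimed w.h.p.\ guarantee.
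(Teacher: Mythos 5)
Your proposal is correct and mirrors the paper's argument: both rely on the Chernoff concentration of the estimators $count(s,X_s)$ (the paper's Equations~\ref{eq:1} and~\ref{eq:2}, which are your facts~(i) and~(ii)), use $r\in S$ to kill the ``all too small'' branch and $v\in S$ (a descendant of the even-separator) to kill the ``all too large'' branch, invoke Lemma~\ref{lem:filter-separator} to bound the size of the candidate set entering Phase~2 while preserving an even-separator, and finish by a union bound over all sampled vertices. The only thing your sketch elides is the explicit bookkeeping of each failure term, which the paper spells out, but the reasoning is the same.
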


\begin{proof}
See Appendix~\ref{app:arborescence}.
\end{proof}

\begin{lemma}\label{lem:learn-separator-complexity}
Let $G = (V, E)$ be a DAG rooted at $r$, with at most $c$ directed (not necessarily disjoint) paths from $r$ to vertex $v$. Then, our \textsf{learn-separator}$(v, Y, V, r)$ method, takes $Q(n) \in O(c |V|)$ queries in $R(n) \in O(1)$ rounds.
\end{lemma}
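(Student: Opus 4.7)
The plan is to walk through Algorithm~\ref{alg:learn-separator} phase by phase and, for each block of queries, bound both the total number of queries issued and the number of rounds they require, then sum the contributions.

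First I would handle the case in which $|Y| \leq |V|/K$, so the algorithm skips \textbf{Phase~1} and goes directly to \textbf{Phase~2}. In this case, for each $s \in Y$ we draw a sample $X_s$ of size $O(\log |V|) = O(K)$ and issue $|X_s|$ path queries, all in parallel. This gives $O(|Y| \cdot K) \subseteq O(|V|)$ queries in a single round, which fits comfortably inside the claimed $O(c|V|)$ bound.

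Next I would treat the more substantial case $|Y| > |V|/K$. Drawing the sample $S$ of expected size $m = C_1\sqrt{|V|}$ and estimating $count(s, X_s)$ for each $s \in S$ costs $O(m \cdot K) = O(\sqrt{|V|}\,\log|V|)$ queries in $1$ parallel round. If none of these estimates immediately yields a separator, the algorithm calls \textsf{filter-separator}$(S, Y, V)$ on line~\ref{line:call-filter}; inspecting Algorithm~\ref{alg:filter-separator}, this amounts to (i) $O(m^2) = O(|V|)$ pairwise path queries to establish a partial order on $S$, in one round, and (ii) for each of the at most $c$ paths, $O(|V|)$ path queries to find the ancestors of $g_i$ and descendants of $l_i$, yielding $O(c|V|)$ queries in one additional round. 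So \textsf{filter-separator} contributes $O(c|V|)$ queries in $O(1)$ rounds.

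Finally I would bound the cost of \textbf{Phase~2} on the filtered set. By Lemma~\ref{lem:filter-separator}, the set returned by \textsf{filter-separator} has size $O(c|Y|\log|V|/\sqrt{|V|}) \subseteq O(c\sqrt{|V|}\log|V|)$, and for each element we spend $O(\log|V|)$ path queries to estimate its descendant count, all in one parallel round. This contributes $O(c\sqrt{|V|}\log^2|V|)$ queries, which is $o(c|V|)$. Adding up the contributions from the two cases yields a total of $O(c|V|)$ queries performed in $O(1)$ parallel rounds, as claimed. The only subtle step is appealing to Lemma~\ref{lem:filter-separator} to tame the size of the filtered candidate set before \textbf{Phase~2}; everything else is a direct count of queries issued by each line of Algorithms~\ref{alg:learn-separator} and~\ref{alg:filter-separator}.
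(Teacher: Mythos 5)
Your proof is correct and follows essentially the same route as the paper: walk through the algorithm line by line, bound the queries and rounds contributed by each block (the $O(mK)$ estimation step, the $O(m^2) + O(c|V|)$ cost of \textsf{filter-separator}, and the \textbf{Phase~2} pass over $Y$), and sum. The only real difference is cosmetic — you split into the cases $|Y| \le |V|/K$ and $|Y| > |V|/K$ explicitly and cite Lemma~\ref{lem:filter-separator} to bound the size of the filtered set before \textbf{Phase~2}, whereas the paper's write-up compresses this into the single statement $O(|Y|K) \in O(|V|)$; your version is a bit more careful about why that holds after filtering, but the underlying accounting is identical.
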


\begin{proof}
\begin{itemize}
    \item In \textbf{phase 1}, it takes $O(m K) \in o(|V|)$ queries in 1 round to estimate the number of descendants for sample $S$.

    \item The call to filter-separator in \textbf{phase 1} takes $m^2$ queries in one round to derive a partial order for $S$, and since there are at most $c$ paths from $r$ to $v$, it takes $O(c \cdot |V|)$ in one round to remove nodes from $Y$.

    \item In \textbf{Phase 2}, it takes $O(|Y| K) \in O(|V|)$ queries in $1$ round to estimate the number of descendants for all nodes of $Y$.
\end{itemize}
\end{proof}

\begin{theorem}\label{thm:DAG-complexity}
Suppose $G=(V,E)$ is a rooted DAG with $|V| = n$, and maximum constant degree, $d$, with at most constant, $c$ directed (not necessarily disjoint) paths from root, $r$, to each vertex. Algorithm~\ref{alg:spanning-tree} learns an arborescence of $G$ using $Q(n) \in O(n \log n)$ and $R(n) \in O(\log n)$ w.h.p.
\end{theorem}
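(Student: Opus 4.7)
The plan is to establish correctness first, then bound the query and round complexities via a recurrence on the recursion tree induced by Algorithm~\ref{alg:spanning-tree}, and finally upgrade expectations to high-probability statements via concentration.

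For correctness, I would argue as follows. By Theorem~\ref{theorem:has-separator}, every rooted sub-DAG encountered in the recursion has an even-separator; by Lemma~\ref{lem:exist-separator}, if we pick $v$ uniformly at random then $A(v)$ contains an even-separator with probability at least $1/d$; and by Lemma~\ref{lem:learn-separator}, whenever $A(v)\cap V$ contains an even-separator, the call to \textsf{learn-separator} returns a near-separator $w$ with high probability. The verification step that $\tfrac{|V|}{d+2}\le |V_1|\le \tfrac{|V|(d+1)}{d+2}$ (using the exact descendant count $|V_1|$ obtained by one round of $|V|$ queries) ensures we never recurse on a non-near-separator. Once a near-separator $w$ is found, $V_1=D(w)\cap V$ and $V_2=V\setminus V_1$ are disjoint and partition $V$; the \textsf{learn-parent} routine identifies exactly one parent $u$ of $w$ inside $V_2$, and the two recursive calls return arborescences rooted at $w$ and at $r$ respectively. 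Their union together with the edge $(u,w)$ is an arborescence of $G$ rooted at $r$.

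For the query complexity, I set up the recurrence. At a recursion node on vertex set of size $m$, a single iteration of the \textbf{while} loop costs $O(m)$ queries to learn $Y=A(v)\cap V$, $O(cm)=O(m)$ queries inside \textsf{learn-separator} by Lemma~\ref{lem:learn-separator-complexity} (using that $c$ is constant), plus $O(m)$ queries to compute $V_1$ and to invoke \textsf{learn-parent}. Since each iteration succeeds with constant probability $\Omega(1/d)=\Omega(1)$, the expected number of iterations at any node is $O(1)$, and the expected work at a node of size $m$ is $O(m)$. Writing $T(n)$ for the expected total cost, and using that $\max(|V_1|,|V_2|)\le \tfrac{d+1}{d+2}\,n$, we obtain
\begin{equation*}
T(n)\;\le\;T(n_1)+T(n_2)+O(n),\qquad n_1+n_2=n,\ \max(n_1,n_2)\le \tfrac{d+1}{d+2}n,
\end{equation*}
which solves to $T(n)=O(n\log n)$ since each level of the recursion tree does $O(n)$ total work and the depth is $O(\log n)$.

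For the round complexity and the high-probability upgrade, I would argue that each single iteration of the \textbf{while} loop costs $O(1)$ rounds (\textsf{learn-separator}, \textsf{learn-parent}, and the descendant computation all have $O(1)$-round implementations stated earlier). The main obstacle is controlling the number of iterations simultaneously at all nodes of the recursion tree. I would handle it by running the \textbf{while} loop at each node for at most $\kappa\log n$ iterations for a sufficiently large constant $\kappa$; since each iteration independently succeeds with probability $\Omega(1)$, the chance that some node exhausts its budget is at most $n\cdot(1-\Omega(1))^{\kappa\log n}\le n^{-c'}$ by a union bound over the $O(n)$ nodes of the recursion tree, giving w.h.p.\ correctness. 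Combined with a similar union bound over the $O(n)$ invocations of \textsf{learn-separator} (each of which fails with probability $n^{-\Omega(1)}$ by Lemma~\ref{lem:learn-separator}), every node of the recursion tree finishes within $O(\log n)$ iterations w.h.p. To convert this into the claimed $O(n\log n)$ query bound rather than $O(n\log^2 n)$, I would apply a concentration inequality for the weighted sum $\sum_v X_v\,n_v$, where $X_v$ is the geometrically distributed number of iterations at node $v$ and the $n_v$ are the corresponding sizes; since $\sum_v n_v=O(n\log n)$ over the whole tree and each $X_v$ has constant mean, a standard tail bound for sums of bounded-parameter geometrics yields total work $O(n\log n)$ w.h.p. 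The round complexity is then $O(\log n)$, because along any root-to-leaf path of the recursion tree the total number of iterations is at most the depth times $O(\log n)$ individual constant-round iterations, which the successive halving of subproblem sizes keeps at $O(\log n)$ w.h.p.
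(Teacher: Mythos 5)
Your proposal follows essentially the same approach as the paper's proof: the per-iteration cost from Lemma~\ref{lem:learn-separator-complexity}, the constant per-iteration success probability obtained by combining Lemmas~\ref{lem:exist-separator} and~\ref{lem:learn-separator}, the balanced divide-and-conquer recurrences for $Q(n)$ and $R(n)$, and a Chernoff-type concentration bound for sums of independent geometric random variables to pass from expected bounds to high-probability bounds. You are somewhat more explicit about correctness (the role of the exact-count verification step and \textsf{learn-parent}) than the paper, and the one small imprecision in your writeup --- that each \textsf{learn-separator} call ``fails with probability $n^{-\Omega(1)}$,'' whereas Lemma~\ref{lem:learn-separator} only guarantees failure probability $O(1/|V|)$ for the current subproblem size, which can be a constant near the base cases --- is harmless because the algorithm's verification step absorbs such failures into the geometric iteration count rather than requiring a separate union bound.
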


\begin{proof}
See Appendix~\ref{app:arborescence}.
\end{proof}

\subsection{Learning a Cross-edge.}


 Next,  we will show that a cross-edge can be learnt using $O(nh)$ queries in just $2$ parallel rounds for an almost-tree of height $h$.
Our \textsf{learn-cross-edge} algorithm takes as input vertices $V$ and edges $E$ of an arborescence of a almost-tree, and returns the cross-edge from the source vertex, $s$, to the destination vertex, $t$. In this algorithm, we refer to $D(v)$ for a vertex $v$ as the set of descendants of $v$ according to $E$ (the  only edges learned by the arborescence). We will show later that there exists a vertex, $c$, whose parent is vertex, $v$, such that the cross-edge has to be from a source vertex $s \in D(c)$ to a destination vertex $t \in (D(v) \setminus D(c))$. In particular, this algorithm first learns $t$ and $c$ with $O(nh)$ queries in $1$ parallel round. Note that $t \in (D(v) \setminus D(c))$ is a node with maximum height having $path(c,t) = 1$.
Once it learns $t$ and $c$, then it learns source $s$, where $s \in D(c)$ is the node with minimum height satisfying $path(s,t)=1$, using $O(n)$ queries in $1$ round. We give the details in Algorithm~\ref{alg:learn-cross-edge}.

\begin{algorithm}[t]
\caption{lean a cross-edge for an almost tree}
\label{alg:learn-cross-edge}

  \SetKwFunction{FMain}{\textsf{learn-cross-edge}}
  \SetKwProg{Fn}{Function}{:}{}
  \SetAlgoNoLine
   \nonl \Fn{\FMain{$V, E$}}{
\DontPrintSemicolon

    \For{$v \in V$}{
        \For{$c \in C(v)$}{
            \ForPar{$t \in (D(V) \setminus D(c))$}{
                Perform query $path(c, t)$ \label{line:ct}
            }
        }
    }

    Let $c$ be the only node and let $t$ be the node with maximum height having $path(c, t) =1$

    \ForPar{$s \in D(c)$}{
        Perform query $path(s, t)$\label{line:st}
    }

    Let $s$ be the node with minimum height having $path(s,t) = 1$.

    \KwRet $(s, t)$
    }

  \end{algorithm}

The following lemma shows that Algorithm~\ref{alg:learn-cross-edge} correctly learns the cross-edge using $O(n h)$ queries in just $2$ rounds.

\begin{lemma}\label{lem:learn-cross-edge}
Given an arborescence with vertex set $V$, and edge set, $E$, of an almost-tree, Algorithm~\ref{alg:learn-cross-edge} learns the cross-edge using $O(n h)$ queries in $2$ rounds.
\end{lemma}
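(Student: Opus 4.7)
The plan is to verify correctness through a careful case analysis of all $(v,c,t)$ triples that can yield $\textsf{path}(c,t)=1$ in the phase-1 sweep, and to bound the query count by a disjoint-subtree telescoping argument. The key structural fact is that if $(s^*,t^*)$ denotes the cross-edge, then $s^*$ and $t^*$ lie in two distinct subtrees of their lowest common ancestor $v^*$ in the arborescence: otherwise, either $s^*$ would be an arborescence-ancestor of $t^*$ (making $(s^*,t^*)$ a transitive edge) or $t^*$ would be an arborescence-ancestor of $s^*$ (creating the cycle $t^*\leadsto s^*\to t^*$ in $G$). Write $c_s$ for the child of $v^*$ on the arborescence path to $s^*$ and $c_t$ for the child on the path to $t^*$, so $c_s\neq c_t$.

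Next I would characterize exactly which triples $(v,c,t)$ iterated at Line~\ref{line:ct} satisfy $\textsf{path}(c,t)=1$. Since $t\notin D(c)$ in the arborescence, any directed $c\leadsto t$ walk in $G$ must traverse the cross-edge; this pins $c$ to the arborescence path from the root to $s^*$ and pins $t$ to $\{t^*\}\cup D(t^*)$. With $v$ equal to the arborescence parent of $c$ and the further constraint $t\in D(v)$, walking along the chain from the root to $s^*$ leaves exactly one legal pair: $(v,c)=(v^*,c_s)$. Indeed, any $c$ strictly above $v^*$ is itself an ancestor of $t^*$ (violating $t\notin D(c)$), while any $c$ strictly below $c_s$ forces $v$ into the subtree under $c_s$, which is disjoint from the subtree rooted at $t^*$ (violating $t\in D(v)$). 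This justifies the ``only node'' claim, and the valid $t$-values are exactly $\{t^*\}\cup D(t^*)$, among which $t^*$ is strictly the tallest, being a proper arborescence-ancestor of every other candidate.

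For Line~\ref{line:st}, the vertices $s\in D(c_s)$ with $\textsf{path}(s,t^*)=1$ are exactly those on the arborescence path from $c_s$ down to $s^*$: reaching $t^*$ again requires the unique cross-edge, so $s$ must be an ancestor of $s^*$, and by hypothesis $s\in D(c_s)$. Along this chain the height strictly decreases as we descend, so $s^*$ is the unique minimum-height candidate and is returned correctly.

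For the complexity, phase~1 issues $\sum_{v\in V}\sum_{c\in C(v)}|D(v)\setminus D(c)|$ queries. Since the subtrees rooted at the children of $v$ partition $D(v)\setminus\{v\}$ in the arborescence, the inner sum is at most $d\cdot|D(v)|$; swapping summation gives $\sum_v|D(v)|=\sum_u|A(u)|\leq nh$, because each vertex has at most $h$ arborescence-ancestors. This yields $O(nh)$ queries in phase~1, to which phase~2 adds at most $|D(c_s)|\leq n$. All phase-1 queries are mutually independent and can be batched into one parallel round, and likewise phase~2, for a total of $2$ rounds. The subtlest step to nail down is the uniqueness-of-$(v,c)$ case analysis together with the height convention (longest downward path to a leaf): it is this convention that makes ``maximum height'' pick out $t^*$ at the top of its subtree and ``minimum height'' pick out $s^*$ at the bottom of its arborescence chain.
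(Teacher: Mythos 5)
Your proof is correct and follows essentially the same approach as the paper: identify the LCA $v^*$ of the cross-edge endpoints and its child $c_s$ toward $s^*$, argue that $(v^*,c_s)$ is the unique pair detected in the first sweep with $t^*$ as the maximum-height hit, locate $s^*$ by a parallel search in $D(c_s)$, and charge the queries at Line~\ref{line:ct} to $v$ to get $\sum_v O(d\,|D(v)|) = O(nh)$. You are somewhat more explicit than the paper in justifying the uniqueness of $(v,c)$ via the three-way case analysis and in flagging the height convention, but the underlying argument is the same.
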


\begin{proof}
Suppose that the cross-edge is from a vertex $s$ to to a vertex $t$. Let $v$ be the least common ancestor of $s$ and $t$ in the arborescence, and let $c$ be a child of $v$ on the path from $v$ to $
s$. Since $t \in (D(v) \setminus D(c))$, we have that $path(c,t)=1$ in Line~\ref{line:ct}. Note that since there is only one cross-edge, there will be exactly one node such as $c$ satisfying $path(c,t)=1$. Note that in Line~\ref{line:ct} we can also learn $t$, which is the node with maximum height satisfying $path(c,t)=1$. Finally, we just do a parallel search in the descendant set of $c$ to learn $s$ in Line~\ref{line:st}.

We charge each $path(c,t)$ query in Line~\ref{line:ct} to the vertex $v$. Since each vertex has at most $d$ children the number of queries associated with vertex $v$ will be at most $O(|D(v)| \cdot d)$. Hence, using a double counting argument and the fact that each vertex is a descendant of $O(h)$ vertices, the sum of the queries performed Line~\ref{line:ct} will be,
$\Sigma_{v \in V}{O(|D(v)| \cdot d)} = O(n h)$. Finally, we need $O(n)$ queries $1$ round to learn $s$ in Line~\ref{line:st}.

\end{proof}

\begin{theorem}
  Given vertices, $V$, of an almost-tree, we can learn root, $r$, and the edges, $E$, using $Q(n) \in O(n \log n + nh)$ path queries, and $R(n) \in O(\log n)$ w.h.p.
\end{theorem}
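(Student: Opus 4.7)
The plan is to combine three previously established subroutines into a single pipeline. First, I would invoke \textsf{learn-root}$(V)$ to recover the root $r$. An almost-tree consists of an arborescence plus one cross edge, so the root has at most $c = 2$ directed (not necessarily disjoint) paths to any vertex $v$. Hence Lemma~\ref{lem:root_complexity} applies, and this phase succeeds with probability at least $1 - 1/n$ using $Q(n) \in O(n)$ queries in $R(n) \in O(1)$ rounds. Second, with $r$ known, the input becomes a rooted DAG of constant degree with at most $c = 2$ root-to-vertex paths, so the preconditions of Theorem~\ref{thm:DAG-complexity} are met. I would call \textsf{learn-spanning-tree}$(V, r)$ to obtain an arborescence $(V, E_T)$ using $Q(n) \in O(n \log n)$ queries in $R(n) \in O(\log n)$ rounds, w.h.p.

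Third, I would feed $(V, E_T)$ to \textsf{learn-cross-edge}. By Lemma~\ref{lem:learn-cross-edge} this deterministically identifies the unique cross-edge $(s, t)$ in $O(nh)$ queries and $2$ rounds. Returning $E = E_T \cup \{(s,t)\}$ completes the reconstruction. Summing the costs yields
\[
Q(n) \in O(n) + O(n \log n) + O(nh) = O(n \log n + nh),
\qquad
R(n) \in O(1) + O(\log n) + O(1) = O(\log n),
\]
matching the claimed bounds.

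The only step requiring care is the probabilistic bookkeeping. The root-finding phase and the spanning-tree phase are each randomized, and the spanning-tree phase internally applies sampling-based filtering and separator-finding across an $O(\log n)$-depth recursion. The main obstacle is therefore verifying that the sampling constants inside \textsf{learn-root} and \textsf{learn-spanning-tree} can be tuned uniformly so that a single union bound over all randomized events (a constant number of rounds of root learning, $O(\log n)$ levels of spanning-tree recursion, and $O(n)$ leaf-to-root paths in the recursion tree) still yields success probability at least $1 - 1/n^c$ for any desired constant $c \geq 1$. This is a routine adjustment: since each Chernoff-style bound already gives polynomial tail decay in $n$, boosting the constants $C_1$ and $K$ in the sampling subroutines by a multiplicative factor absorbs the union-bound overhead without changing the asymptotic query or round complexity. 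The cross-edge step is deterministic and needs no probabilistic adjustment.
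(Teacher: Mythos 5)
Your proposal is correct and matches the paper's proof essentially step for step: learn the root via Lemma~\ref{lem:root_complexity} (using the observation that an almost-tree has at most $c=2$ root-to-vertex paths), learn an arborescence via Theorem~\ref{thm:DAG-complexity}, and then deterministically learn the cross-edge via Lemma~\ref{lem:learn-cross-edge}, summing the costs. Your extra remarks on tuning the sampling constants to absorb the union bound are a reasonable elaboration of what the paper leaves implicit.
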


\begin{proof}
Note that in almost-trees there are at most $c=2$ paths from root $r$ to each vertex. Therefore, by Lemma~\ref{lem:root_complexity}, we can learn root of the graph using $O(n)$ queries in $O(1)$ rounds with probability at least $1 - \frac{1}{|V|}$. Then, by Theorem~\ref{thm:DAG-complexity}, we can learn a spanning tree of the graph using $O(n \log n)$ queries in $O(\log n)$ rounds with probability at least $1 - \frac{1}{|V|}$. Finally, by Lemma~\ref{lem:learn-cross-edge} we can deterministically learn a cross-edge using $O(nh)$ queries in just $2$ rounds.
\end{proof}

\subsection{Lower bound}

The following lower bound improves the one by Janardhanan and Reyzin~\cite{DBLP:journals/corr/abs-2002-11541} and proves that our algorithm to learn almost-trees in optimal.

\begin{theorem}
  Let $G$ be a a degree-$d$ almost-tree of height $h$ with $n$ vertices. Learning $G$ takes $\Omega( n \log n + nh)$ queries. This lower bound holds for both worst case of a deterministic algorithm and for an expected cost of a randomized algorithm.
\end{theorem}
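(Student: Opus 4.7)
The plan is to establish the two terms $\Omega(n\log n)$ and $\Omega(nh)$ of the bound separately, and to lift the argument from the deterministic worst case to the randomized expected case via Yao's minimax principle: it suffices to exhibit a distribution over almost-tree inputs against which every deterministic algorithm has expected query cost $\Omega(n\log n + nh)$, and both the deterministic worst-case and the randomized expected-case lower bounds then follow.

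For the $\Omega(n\log n)$ term, I would appeal directly to the lower bound of Afshar {\it et al.}~\cite{DBLP:conf/esa/AfsharGMO20} for fixed-degree rooted trees, which is $\Omega(nd + n\log n)$ and hence $\Omega(n\log n)$ at constant $d$. To transfer this to almost-trees, I embed a hard rooted tree on $\Theta(n)$ vertices as the arborescence of an almost-tree with the cross-edge pinned to a fixed, inert location --- for instance, two auxiliary vertices reserved outside the hard tree whose only role is to carry a cross-edge that does not interact with the embedded hardness. Any algorithm that learns the resulting almost-tree then in particular recovers the embedded tree and inherits the $\Omega(n\log n)$ bound.

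For the $\Omega(nh)$ term, I would run an adversary / certificate-style argument on a carefully chosen family of almost-trees of height $h$. The arborescence is a ``broom'' of $k = \Theta(n/h)$ disjoint directed paths of length $h$ rooted at a common vertex $r$, and the cross-edge is allowed to float within a designed subfamily of $\Omega(nh)$ valid (non-transitive, acyclic) candidate cross-edges between distinct paths. For each candidate cross-edge $(s,t)$ I would identify a canonical ``witness'' query $\textsf{path}(c_{s,t}, w_{s,t})$ --- mirroring the $O(nh)$ charging scheme used in the upper-bound Algorithm~\ref{alg:learn-cross-edge}, where $c_{s,t}$ is a child of some ancestor of $s$ in the arborescence and $w_{s,t}$ lies in the subtree of the parent of $c_{s,t}$ but not in the subtree of $c_{s,t}$ --- so that this query is the unique query in the whole family whose answer distinguishes the candidate $(s,t)$ from a designated reference almost-tree. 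Any deterministic algorithm that correctly identifies the input must then include each such witness query in its transcript on the reference instance, forcing $\Omega(nh)$ queries.

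The main obstacle will be controlling the overlap of witness queries across the family, since any path query $\textsf{path}(u,v)$ simultaneously probes an entire rectangle of (ancestor-of-source, descendant-of-destination) pairs, and so a naive choice of candidate cross-edges lets one query serve as witness for many candidates at once --- collapsing the bound back to the information-theoretic $O(\log n)$. The careful step of the proof is therefore to choose the $\Omega(nh)$ candidates so that their witness queries are pairwise essentially disjoint, which I expect to accomplish by placing each cross-edge so that its source and destination occupy a unique (path-index, position-in-path) slot on the broom arborescence, and then arguing that no single path query can be the witness for more than $O(1)$ candidates. Combining this with the $\Omega(n\log n)$ term and invoking Yao's minimax principle then yields the claimed $\Omega(n\log n + nh)$ lower bound for both the deterministic worst case and the randomized expected case, establishing optimality of the upper bound from this section.
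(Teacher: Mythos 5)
Your decomposition into an $\Omega(n\log n)$ term and an $\Omega(nh)$ term, and the use of Yao's principle to lift to randomized algorithms, are both sound, and your treatment of the $\Omega(n\log n)$ term (embedding a hard rooted tree from Afshar {\it et al.} with an inert cross-edge tacked on) is a fine alternative to the paper's counting argument. The gap is in the $\Omega(nh)$ term, and it is not just the bookkeeping issue you flag at the end --- the broom construction simply does not produce a hard instance.

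Concretely, let the broom have $k=\Theta(n/h)$ paths $P_1,\dots,P_k$ of length $h$ hanging off the root. For any pair of paths $P_a,P_b$, a \textbf{single} query $\textsf{path}(u_a, w_b)$, where $u_a$ is the depth-$1$ vertex of $P_a$ and $w_b$ is the leaf of $P_b$, returns $1$ if and only if the cross-edge runs from some vertex of $P_a$ to some vertex of $P_b$: $u_a$ is an ancestor of every candidate source on $P_a$, and $w_b$ is a descendant of every candidate destination on $P_b$. So after $k^2=\Theta((n/h)^2)$ such queries the algorithm has pinned down the ordered pair of paths, after which two binary searches ($O(\log h)$ queries) locate the exact endpoints. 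Thus the cross-edge can be learned in $O((n/h)^2+\log n)$ queries, which is $o(nh)$ for every $h=\omega(n^{1/3})$. There is no choice of $\Omega(nh)$ candidates on a broom for which the witness queries can be made essentially disjoint: the detection rectangle of a candidate $(s,t)$ with $s\in P_a$, $t\in P_b$ is $(A(s)\cap P_a)\times(D(t)\cap P_b)$, and any two such rectangles on the same ordered pair of paths share the corner query $\textsf{path}(u_a,w_b)$. Restricting to one candidate per ordered pair caps you at $k^2$ candidates, which is below $nh$ precisely in the regime where the broom would need to be useful.

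The paper's construction is designed to kill exactly this ``rectangle-collapse'' escape route. It takes a caterpillar of height $h$ (spine $v_0\to\cdots\to v_h$ with a leaf $l_i$ off each spine vertex) and hangs a complete $d$-ary tree with $\Omega(n)$ leaves $\{t_j\}$ below $v_h$; the cross-edge runs from some caterpillar leaf $l_i$ to some $d$-ary leaf $t_j$. The key property is that every \emph{proper} ancestor of $l_i$ lies on the spine and therefore already reaches every $t_j$ through the arborescence, so querying an ancestor of $l_i$ reveals nothing; and $t_j$ is a leaf, so it has no proper descendants. Hence the \emph{only} query whose answer depends on whether the cross-edge is $(l_i,t_j)$ is $\textsf{path}(l_i,t_j)$ itself, giving $\Theta(h)\cdot\Omega(n)=\Omega(nh)$ pairwise distinct, unavoidable witness queries. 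The same $d$-ary gadget then carries the $\Omega(n\log n)$ information-theoretic bound for free (there are $m!/(d!)^{m/d}$ ways to assign its $m=\Omega(n)$ leaves to parents, and the cross-edge witness queries $\textsf{path}(l_i,t_j)$ reveal nothing about that assignment), so both terms come out of one instance with non-overlapping query budgets. If you want to keep your two-part structure, you can; but for the $\Omega(nh)$ part you need a construction in which every candidate source has a trivial useful ancestor set, which the caterpillar provides and the broom does not.
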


\begin{proof}
We use the same graph as the one used by Janardhanan and Reyzin~\cite{DBLP:journals/corr/abs-2002-11541}, but we improve their bound using an information-theoretic argument. Consider a caterpillar graph with height $h$, and a complete $d$-ary tree with $\Omega(n)$ leaves attached to the last level of it. If there is a cross-edge from one of the leaves of the caterpillar to one of the leaves of the  $d$-ary tree, it takes $\Omega(n h)$ queries involving a leaf of the caterpillar and a leaf of the $d$-ary tree. Suppose that a querier, Bob, knows the internal nodes of the $d$-ary, and he wants to know that for each leaf $l$ of the $d$-ary, what is the parent of $l$ in the $d$-ary tree. If there are $m$ leaves for the caterpillar, the number of possible $d$-ary trees will be at least $\frac{m!}{(d!)^{m/d}}$. Therofore, using an information-theoretic lower bound, we need $\Omega\left(\log \left(\frac{m!}{(d!)^{m/d}}\right)\right)$ bit of information to be able to learn the parent of the leaves of $d$-ary tree. Since the queries involving a leaf of the caterpillar and a leaf of the $d$-ary tree do not provide any information about how the $d$-ary tree is built, it takes $\Omega(n \log n)$ queries to learn the $d$-ary tree.
\end{proof}


\clearpage
\bibliographystyle{splncs04}

\bibliography{ref}

\clearpage

\begin{appendix}

\section{Rooted Trees of Arbitrary Height}\label{app:arbitrary-height}

\begin{theorem}\textbf{(Theorem~\ref{thm:arbitrary})}
  We can deterministically learn a fixed-degree directed rooted tree using $O(n^{3/2} \sqrt{\log {n}})$ path queries.
\end{theorem}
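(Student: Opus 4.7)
The plan is to invoke the arbitrary-height learning framework of Jagadish and Anindya as a black box and replace its inner height-$h$ subroutine with our \textsf{learn-short-tree}. Jagadish and Anindya originally combine their $O(nh\log n)$ subroutine with an outer decomposition step and, setting a height threshold $h^\star$, balance these two costs to obtain $O(n^{3/2}\log n)$ separator queries. Our subroutine costs only $O(nh)$ path queries, so the inner term drops by a $\log n$ factor and the balance point shifts.

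The first step is to observe that on a directed rooted tree any separator query of the Jagadish-Anindya algorithm can be emulated with $O(1)$ path queries using the ancestor tests offered by \textsf{path}$(\cdot,\cdot)$ together with knowledge of the root; thus any sequence of $Q$ separator queries on the undirected version translates to $O(Q)$ path queries on the directed one. Next I would run their decomposition faithfully: set a threshold $h^\star$, repeatedly locate a vertex $v$ whose subtree is tall (height greater than $h^\star$) but all of whose children's subtrees are short (height at most $h^\star$), peel off these shallow subtrees by invoking \textsf{learn-short-tree} on each, and recurse on the remaining skeleton.

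The analysis then splits into two contributions. The calls to \textsf{learn-short-tree} are over disjoint vertex sets covering $V$ and each costs $O(n_i h^\star)$, so they sum to $O(n h^\star)$ in total. The outer decomposition step inherits the $O(n^2 \log n / h^\star)$ bound from Jagadish and Anindya, since they spend $O(n \log n)$ queries per level of the skeleton to locate the next threshold vertex, over $O(n/h^\star)$ levels. Balancing the two terms yields the choice $h^\star = \sqrt{n \log n}$ and a total cost of $O(n \sqrt{n \log n}) = O(n^{3/2}\sqrt{\log n})$.

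The main obstacle will be checking that the constant-factor blow-up from simulating separator queries with path queries, and the switch from undirected to directed rooted trees, do not secretly introduce hidden $\log n$ factors in the outer decomposition. I expect this to go through because the only structural primitive their algorithm uses on the underlying tree is testing whether one vertex separates two others along an undirected path, which in a rooted setting reduces to checking ancestor relationships among a constant number of vertices, giving an $O(1)$ simulation per query and preserving the amortized charging argument verbatim.
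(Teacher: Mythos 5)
Your proposal is correct and is essentially the paper's proof: invoke the Jagadish--Anindya arbitrary-height decomposition, substitute \textsf{learn-short-tree} for their inner short-height subroutine, and re-balance the outer $O(n^2\log n/h^\star)$ skeleton cost against the inner $O(nh^\star)$ cost. The paper parametrizes by $l = n/h^\star$ and sets $l = \sqrt{n/\log n}$, which is exactly your $h^\star = \sqrt{n\log n}$. The one detail worth sharpening is your simulation claim. A general separator query $sep(a,b,c)$ on a rooted tree is \emph{not} decidable by $O(1)$ path queries---for instance, when $b$ is an ancestor of both $a$ and $c$ one must decide whether $b$ equals their LCA, which is not a constant-size ancestor test---so the blanket ``checking ancestor relationships among a constant number of vertices'' justification does not go through as stated. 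The paper avoids this by observing that every separator query issued while building the skeleton $T'$ has the specific form $sep(r,x,y)$ with $r$ the fixed root, and that query reduces to the single path query $path(x,y)$; with that observation in place the rest of your accounting and balancing is identical to the paper's.
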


\begin{proof}
 Jagadish and Anindya~\cite[Section~5.2]{DBLP:conf/alt/JagadishS13} provided an algorithm to learn undirected trees of arbitrary height with separator queries through the following subroutine: Given a tree $T$ and an arbitrary node set as root $r$, return a subgraph $T'$ such that for any path such as $P$, from $r$ to a leaf in $T$, $T'$ contains at least $n -  h$ vertices of $P$. Once they find $T'$, they use a an algorithm to learn trees of short height for the missing parts on each path. They control $h$ by a controlling parameter, $l$, where $h = n / l$. Besides, all the queries to find $T'$ are in the form ancestor queries which can be simply simulated by $O(1)$ path queries.
 Further, we can replace their short height tree learning algorithm with our \textsf{learn-short-tree} algorithm. Their algorithm takes $O(n l \log n)$ to learn $T'$, and $O(n h \log n) \in O(\frac{n^2}{ l} \log n )$ queries to learn the missing parts on the paths through their short depth tree learning method. We learn $T'$ using $O(n l \log n)$ path queries since all of their separator quries are in the form of $sep(r, x, y)$ where $r$ is the root, by simulating it with \textsf{path}$(x,y)$. Since our \textsf{learn-short-tree} method takes $O(n h) \in O(n^2/l)$ queries, if we set $l = \sqrt{n/ \log n}$, this amounts to a method using a total number of $Q(n) \in O(n^{3/2} \sqrt{\log n})$ queries to learn trees of arbitrary height.
\end{proof}

\section{Undirected Trees and Separator Queries}\label{app:sep-queries}

We now show how to adapt a
path-querying algorithm
to derive an algorithm
for learning an undirected fixed-degree tree
using separator queries.
This will establish improvements upon the results of Jagadish and Anindya \cite{DBLP:conf/alt/JagadishS13}.

\begin{definition}{(separator query)}
On an undirected tree $T=(V,E)$, a separator query is a function,
$sep: V \times V \times V \rightarrow \{ 0 , 1 \}$,
such that $sep(a,b,c)=1$ if removing vertex $b$ disconnects vertex $a$ from vertex $c$, and $sep(a,b,c)=0$ otherwise.
\end{definition}

Our separator querying method  (\textsf{learn-undirected-tree}) is based on a simple simulation of a
path-query algorithm (\textsf{learn-rooted-tree}), and an observation
that we can implement path queries using separator queries.
Given an undirected tree $T=(V,E^{\prime})$, we transform it into
a rooted directed tree $T=(V,E,r)$ by arbitrarily choosing a vertex, $r$,
as the root of the tree.
Then, we orient the edges in $E$ away from $r$.
Given this view, for each path query in our tree-reconstruction
algorithm, we note that $path(u,v)=1$ if and only if
$sep(r,u,v)=1$ (see Figure~\ref{fig:separator_reduction}). Finally, we report the edges returned in $\textsf{learn-rooted-tree}(V,r)$ with direction removed.

\begin{algorithm}[hbt]
\caption{Learn an undirected rooted tree with separator queries}
\label{alg:alg_reduction}

\SetAlgoNoLine
\SetKwFunction{FMain}{\textsf{learn-undirected-tree}}
\SetKwProg{Fn}{Function}{:}{}
\Fn{\FMain{$V$}}{
\DontPrintSemicolon

    pick a vertex $r$ arbitrarily from $V$ and set it as root.

    We define the path query $path(u,v)$ according to $sep(r,u,v)$: if $sep(r,u,v)=1$, then $path(u,v)=1$; otherwise, $path(u,v)=0$.

    $E \gets \textsf{learn-rooted-tree}(V,r)$

    $E' \gets $ edges of $E$ with direction removed

   \KwRet $E'$
}
\end{algorithm}

\begin{figure}[hbt]
    \centering
    \includegraphics[scale=.8]{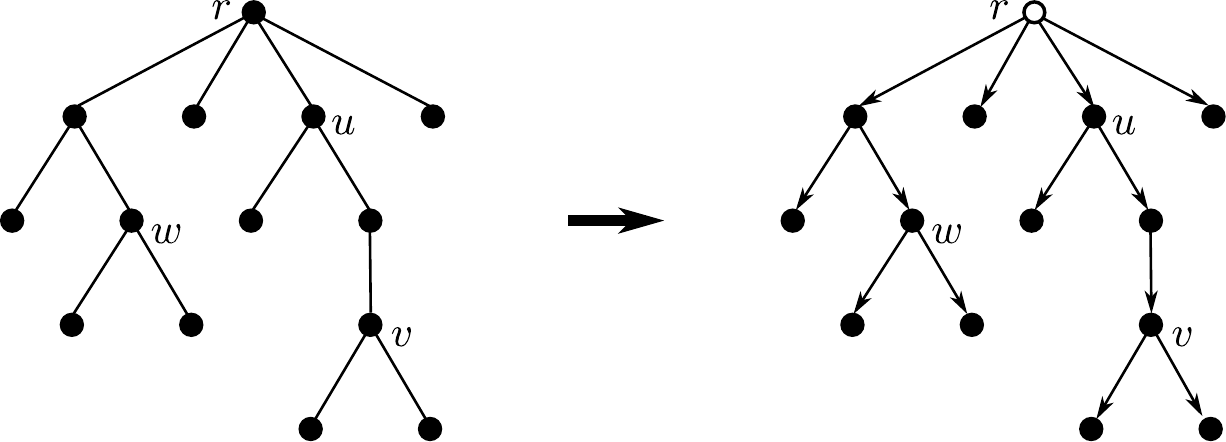}
    \caption{The reduction of separator queries (left) to path queries (right). We have that (i) $sep(r,u,v)=1 \iff path(u,v)=1$ and (ii) $sep(r,u,w)=0 \iff path(u,w)=0$.}
    \label{fig:separator_reduction}
\end{figure}

\begin{theorem}\textbf{(Theorem~\ref{thm:alg_reduction})}
Let $T=(V,E)$ be a fixed-degree undirected tree. If $T$ has height $h$, we can deterministically learn $T$ with $O(nh)$ separator queries, and if it has an arbitrary height, we can learn it with $O(n^{3/2} \sqrt{\log {n}})$ queries.
\end{theorem}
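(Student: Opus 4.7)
The plan is to reduce separator-query learning on an undirected tree to path-query learning on a directed rooted tree via the simulation in Algorithm~\ref{alg:alg_reduction}, and then invoke the path-query bounds from Section~\ref{subsec:directed-trees}.

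First, I would pick an arbitrary vertex $r \in V$ and view $T$ as a rooted directed tree $T_r = (V, E, r)$ with every edge oriented away from $r$. The key claim to verify is that for any $u, v \in V \setminus \{r\}$ with $u \neq v$, $sep_T(r, u, v) = 1$ if and only if $path_{T_r}(u, v) = 1$. This follows because in an undirected tree there is a unique simple path between any two vertices, so removing $u$ disconnects $r$ from $v$ exactly when $u$ lies on the unique $r$-to-$v$ path in $T$, which in turn occurs if and only if $u$ is a proper ancestor of $v$ in $T_r$, i.e., there is a directed path $u \to v$ in $T_r$. Boundary cases where one argument equals $r$ are handled without issuing a query (for instance $path_{T_r}(u, r) = 0$ always).

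With this equivalence in hand, every path query issued by a rooted-tree learning algorithm on $T_r$ can be answered by exactly one separator query on $T$ with $r$ fixed as the first argument. For the height-$h$ case, note that since $T$ has (undirected) height $h$, rooting at any vertex gives a directed tree $T_r$ of height at most $2h$ (or at most $h$ if $r$ is chosen to be a centroid); either way $T_r$ has height $O(h)$. Routing the simulated queries into our \textsf{learn-short-tree} method then produces the edges of $T_r$ using $O(nh)$ separator queries. For the arbitrary-height case, routing them into the algorithm from Theorem~\ref{thm:arbitrary} gives the $O(n^{3/2}\sqrt{\log n})$ bound.

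Finally, since the edges returned by the rooted-tree algorithm are exactly the edges of $T$ with orientations attached, stripping the orientations recovers $E$. The only delicate part will be the equivalence between the separator predicate and the ancestry predicate, together with the bookkeeping for queries involving $r$; both are immediate from uniqueness of simple paths in trees, and everything else is a black-box invocation of the cited path-query results.
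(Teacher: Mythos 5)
Your proposal is correct and matches the paper's approach exactly: reduce separator-query learning to path-query learning by fixing an arbitrary root $r$, noting that $sep_T(r,u,v)=1$ iff $path_{T_r}(u,v)=1$, and then invoking the rooted-tree algorithms of Section~\ref{subsec:directed-trees} (\textsf{learn-short-tree} for the height-$h$ case, Theorem~\ref{thm:arbitrary} for arbitrary height). Your explicit verification of the separator-to-ancestry equivalence and your observation that rooting at an arbitrary vertex keeps the height $O(h)$ merely spell out details the paper's one-sentence proof leaves implicit.
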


\begin{proof}
  This follows directly from our results in Subsection~\ref{subsec:directed-trees}, which establish the query query complexity of \textsf{learn-rooted-tree}, the subroutine used in Algorithm~\ref{alg:alg_reduction} that dominates the query complexity.
\end{proof}

\section{Almost-tree Algorithms in Details}\label{app:alg-details}

Algorithm~\ref{alg:spanning-tree} illustrates how our \textsf{learn-spanning-tree} method learns an arborescence for a rooted DAG in details.

\begin{algorithm}[htb!]
\caption{learn a spanning tree in a DAG}
\label{alg:spanning-tree}

\SetKwFunction{FMain}{\textsf{learn-spanning-tree}}
\SetKwProg{Fn}{Function}{:}{}
\SetKwFor{ParQuery}{for each}{query in parallel}{}
\SetAlgoNoLine
\nonl \Fn{\FMain{$V,r$}}{
\DontPrintSemicolon
  $E \gets \emptyset$

  \If(\tcp*[h]{$g$ is a chosen constant}){$|V|\le g$}{
    \KwRet edges found by a quadratic brute-force algorithm
  }
    \While{true}{
      Pick a vertex $v \in V$ uniformly at random\\
        \ForPar{$z\in V$}{ \nonl Perform query $path(z,v)$ to find $Y = A(v) \cap V$}
      w $\gets$ \textsf{learn-separator}$(v, Y ,V, r)$\\
     	\ForPar{$z\in V$}{ \nonl Perform query $path(w,z)$}
        split $V$ into $V_1,V_2$ using query results\;
      \If{w $\neq$ $\mathit{Null}$ \textbf{and} $\frac{|V|}{d+2} \leq |V_1| \leq \frac{|V| (d+1)}{d+2}$ }{
        $u \gets$ \textsf{learn-parent}$(w, V)$\\
        $E \gets E \cup \{(u,w)\}$\\

        \SetKwBlock{Pardo}{parallel do}{}
        \Pardo{
            {$E \gets  E \cup
            \textsf{learn-spanning-tree}(V_1,w)$}

            {$E \gets E \cup
            \textsf{learn-spanning-tree}(V_2,r)$}
        }
        \KwRet $E$
      }
    }
}
\end{algorithm}

Algorithm~\ref{alg:learn-separator} includes the details of how our \textsf{learn-separator} works.

\begin{algorithm}[hbt!]
\caption{For a vertex $v$, find a separator among $Y = A(v) \cap V$}
\label{alg:learn-separator}

  \SetKwFunction{FMain}{\textsf{learn-separator}}
  \SetKwProg{Fn}{Function}{:}{}
  \SetAlgoNoLine
  \nonl \Fn{\FMain{$v, Y, V, r$}}{
\DontPrintSemicolon

     \SetKwProg{phaseone}{Phase 1:}{}{}
      \SetKwProg{phasetwo}{Phase 2:}{}{}

    $m = C_1 \sqrt{|V|}$, $K = C_2 \log{|V|}$ \label{line:c1c2}

    \nonl \phaseone{}{
     \If{$|Y| > |V|/K$}{

        $S \gets$ subset of $m$ random elements from $Y$

        $S \gets S \cup{\{v,r\}}$

        \ForPar{each $s \in S$}{
            $X_s \gets$ subset of $K$ random elements from $V$

           Perform queries to find $count(s,X_s)$
        }

        \lIf{$\forall s \in S : count(s,X_s) < \frac{K}{d+1}$ }{\KwRet $ \mathit{Null}$}    \label{line:less}

          \lIf{$\forall s \in S : count(s,X_s) > \frac{K d}{d+1}$}{\KwRet $\mathit{Null}$} \label{line:more}

            \lIf{$ \exists s \in S : \frac{K}{d+1} \le count(s,X_s) \le \frac{K d}{d+1}$}{
            	\KwRet $s$ \label{line:in}
            }

       $Y \gets $ \textsf{filter-separator}$(S, Y, V)$ \label{line:call-filter}
       }
    }
    \nonl \phasetwo{}{
            \ForPar{each  $ s \in Y$}{
        $X_s \gets$ subset of $K$ random elements from $V$\\
        Perform queries to find $count(s,X_s)$
    }
    \lIf{$\exists s \in Y$\rm{ s.t. }$ \frac{K}{d+1} \leq count(s,X_s) \leq \frac{K d}{d+1}$}{
        \nonl    \KwRet $s$} \label{line:try_all}

    }
    }

    \KwRet $\mathit{Null}$ \label{line:last_Null}

  \end{algorithm}

Algorithm~\ref{alg:learn-parent}, \textsf{learn-parent} method, illustrates how we learn a parent of a given vertex $v$.  It is similar to our \textsf{learn-root} method but we explain it below for completeness. This can be used to find a parent, $p(v)$, of an arbitrary vertex $v \in V$ using $O(n)$ queries in $O(1)$ parallel rounds w.h.p. It takes as input vertex $v$, vertex set $V$, and returns a parent of $v$. It first learns in parallel, $Y$, the ancestor set of $v$ (the nodes $u \in V$ such that $path(u,v)=1$). Then, while $|Y|> m$, where $m = C_1 * \sqrt{|V|}$ for some constant $C_1$ determined in the analysis, it takes a sample $S$ of expected size of $m$ from $Y$ uniformly at random. Then, it performs path queries for every pair $(a,b) \in S \times S$ in parallel to learn a partial order of $S$, that is we say $a < b$ if and only if $path(a,  b) = 1$. Hence, a parent of $v$ should be a descendant of a maximal element in $S$. Using this fact, we continue narrowing down $Y$ until $ |Y| \leq m$, where we can afford to generate a partial order of $Y$ in Line~\ref{line:parent-brute-force-comparison}, and return a maximal element of $Y$.

\begin{algorithm}[tb]
\caption{Our algorithm to learn a parent of $v$ in $V$}\label{alg:learn-parent}
\SetKwProg{Fn}{Function}{:}{}
\SetKwFunction{FMain}{\textsf{learn-parent}}
\SetAlgoNoLine
\nonl \Fn{\FMain{$v, V$}}{
\DontPrintSemicolon
    $m = C_1 * \sqrt{|V|}$

    \ForPar{each $u \in V$}{
        Perform query $path(u,v)$ to find ancestor set $Y$
    }

    \While{$|Y|> m $}{
        $ S \gets $ a random sample of expected size $m$ from $Y$

        \ForPar{$(a,b) \in S \times S $ }{
           \nonl Perform query $path(a,b)$
        }

            Pick a vertex $y \in S$ such that for all $b \in S$: $path(y,b)==0$

            \ForPar{$b \in Y$}{
             \nonl   Perform query $path(y,b)$ to find descendants of $y$, $Y'$
            }

        $Y \gets Y'$
    }
    \ForPar{$(a,b) \in Y \times Y $ }{
          \nonl  Perform query $path(a,b)$ \label{line:parent-brute-force-comparison}
    }

            $y \gets$  a vertex in $Y$ such that for all $a \in Y$: $path(y,a)==0$

        \KwRet{$y$}

    }

\end{algorithm}

\begin{lemma}\label{lem:parent_complexity}
In a DAG, $G=(V,E)$, suppose there are at most $c \in O(n^{1/2 - \epsilon})$ for $0 < \epsilon < 1/2$ (not necessarily disjoint) directed paths from roots to vertex $v$, then \textsf{learn-parent}$(v,V)$
outputs a parent of $v$ with probability at least $1-\frac{1}{|V|}$, with $Q(n) \in O(n)$ and $R(n) \in O(1)$.
\end{lemma}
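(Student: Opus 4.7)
The plan is to adapt the proof of Lemma~\ref{lem:root_complexity} to the parent-finding setting, replacing the ancestor relation with the descendant relation throughout. The argument has three strands: correctness, per-iteration shrinkage of $Y$, and query/round complexity.

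For correctness I would maintain the invariant that every vertex maximal in $Y$ under the descendant relation is a parent of $v$ (and that such a maximal vertex exists). Initially $Y = A(v) \cap V$; any $u \in Y$ whose shortest path to $v$ has length $\ge 2$ has a next-step vertex $x \in D(u) \cap A(v) \cap V$, so it cannot be maximal, forcing maximal elements to be parents. For the inductive step, let $y^\star$ be the maximal element of $S$ that the algorithm selects and set $Y' = D(y^\star) \cap Y$. If $w \in Y'$ were maximal in $Y'$ but not a parent of $v$, the next vertex $x$ on a shortest $w$-to-$v$ path would satisfy $x \in D(w) \subseteq D(y^\star)$ and $x \in A(v) \cap V$; since $Y$ is an intersection of previously chosen $D(y_j^\star)$'s with $A(v) \cap V$, transitivity gives $x \in Y$, hence $x \in Y'$, contradicting maximality of $w$. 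So the invariant persists, and the post-loop brute force that returns a maximal element of $Y$ returns a parent of $v$.

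For shrinkage I would invoke Lemma~\ref{lem:elements_scattered} on the $c \in O(n^{1/2-\epsilon})$ root-to-$v$ paths covering $Y$. With probability at least $1 - 1/|V|^2$, on every such path $P_i$ consecutive samples sit within distance $2s$ of each other, where $s = |Y|\log|V|/\sqrt{|V|}$. If $y^\star \in P_i$, then maximality forces $y^\star$ to be the last sample on $P_i$, so $D(y^\star) \cap P_i$ is a suffix of length at most $2s$. If $y^\star \notin P_i$, then any sample of $P_i$ that were a descendant of $y^\star$ would be distinct from $y^\star$ and contradict its maximality in $S$; so $D(y^\star) \cap P_i$ lies strictly past the last sample on $P_i$, again of length at most $2s$. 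Summing, $|Y'| \le 2cs = O(|Y| \log|V| \cdot |V|^{-\epsilon})$, a shrinkage factor of $|V|^{\epsilon}/\log|V|$ per iteration, so after $O(1/\epsilon) = O(1)$ iterations $|Y| \le m$. A union bound keeps the overall failure probability within $1/|V|$.

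The complexity bookkeeping matches that of \textsf{learn-root}: the initial ancestor query costs $O(|V|)$ queries in one round; each of the $O(1)$ iterations costs $O(m^2)$ queries for the partial order on $S$ plus $O(|Y|) = O(|V|)$ queries for the descendants-of-$y^\star$ step, in $O(1)$ rounds; and the final brute-force partial order costs $O(m^2) = O(|V|)$ queries in one round. Hence $Q(n) \in O(n)$ and $R(n) \in O(1)$. The main obstacle I expect is the shrinkage argument for paths $P_i$ that do not pass through $y^\star$; the crucial observation is that any sample of such a $P_i$ in $D(y^\star)$ must be distinct from $y^\star$ and would therefore violate its maximality in $S$, confining $D(y^\star) \cap P_i$ to the unsampled final section of $P_i$ and matching the bound for paths through $y^\star$.
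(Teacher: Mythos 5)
Your proposal is correct and follows essentially the same route as the paper's proof: establish correctness via the maximal-element-of-$Y$-is-a-parent observation, invoke Lemma~\ref{lem:elements_scattered} with a union bound to show $|Y|$ shrinks to $O(m)$ within $O(1/\epsilon)$ iterations, then account for $O(|V|)$ queries per stage. The paper's version is considerably terser (it just asserts the shrinkage from Lemma~\ref{lem:elements_scattered}), whereas you spell out the suffix/maximality argument and the invariant that maximal elements of the shrinking $Y$ remain parents of $v$; these are useful details the paper leaves implicit, not a different approach.
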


\begin{proof}
The correctness of the \textsf{learn-parent} method relies on the fact that if $Y$ is a set of ancestors of vertex $v$, then (i) for vertex $y$, a parent of $v$ (a youngest ancestor of $v$), we have that $path(y,a)==0$ for all $a \in Y$. Using Lemma~\ref{lem:elements_scattered} and a union bound, after at most $1 / \epsilon$ iterations of the \textbf{While} loop, with probability at least $1 - \frac{1 / \epsilon}{|V|^2}$, the size of $|Y|$ will be $O(m)$. Hence, we will be able to find a parent using the queries performed in Line~\ref{line:parent-brute-force-comparison}.

Therefore, the query complexity of the algorithm is as follows w.h.p:
\begin{itemize}
    \item We have $O(|V|)$ queries in $1$ round to find ancestors of $v$.
    \item Then, we have $ 1 / \epsilon$ iterations of the \textbf{while} loop, each having $O(m^2) + O(|Y|) \in O(|V|)$ queries in $1 / \epsilon$ rounds.
    \item Finally,  we have $O(m^2)$ queries performed in $1$ round in Line~\ref{line:parent-brute-force-comparison}.
\end{itemize}
Overall, this amounts to $Q(n) \in O(n)$, $R(n) \in O(1)$ w.h.p.

\end{proof}

\section{Proofs for Learning an Arborescence}\label{app:arborescence}

The following lemma shows that our \textsf{filter-separator} effectively in parallel eliminates the nodes that are unlikely to act as a separator to pass a sufficiently small set $Y$ to \textbf{Phase 2} of our \textsf{learn-separator} algorithm.

\begin{lemma}\textbf{(Lemma~\ref{lem:filter-separator})}
Let $G = (V, E)$ be a DAG rooted at $r$, with at most $c$ directed (not necessarily disjoint) paths from $r$ to vertex $v$, and let $Y = A(v) \cap V$, and let $S$ be a random sample of expected size $m$ that includes $v$, and $r$ as well. The call to \textsf{filter-separator}$(S, Y, V)$ in line~\ref{line:call-filter} of our \textsf{learn-separator} method returns a set of size $O(c \cdot |Y| \log |V| / \sqrt{|V|})$, and If $Y$ has an even-separator, the returned set includes an even-separator with probability at least $1 - \frac{|S| + 1}{|V|^2}$.
\end{lemma}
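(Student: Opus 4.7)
The plan is to establish the two assertions (size and preservation) by combining Lemma~\ref{lem:elements_scattered} for the former with a Chernoff bound for the latter, keyed off the structural observation that reaching line~\ref{line:call-filter} means no sampled node has a ``middle'' count estimate.

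First I would handle the size bound. Since control reached line~\ref{line:call-filter}, the tests on lines~\ref{line:less}, \ref{line:more}, and~\ref{line:in} did not fire, so every $s \in S$ satisfies either $count(s, X_s) < K/(d+1)$ (``low'') or $count(s, X_s) > K d/(d+1)$ (``high''). Fix a path $P_i$ and order $S \cap P_i$ by depth from $r$ to $v$. By the choice of $g_i$ as the deepest high-count node in $S \cap P_i$, every sampled node strictly deeper on $P_i$ is low; by the choice of $l_i$ as the shallowest low-count node on $P_i$, every sampled node strictly shallower is high. Hence $g_i$ and $l_i$ are consecutive in $S \cap P_i$, and since \textsf{filter-separator} removes both $A(g_i) \cap V$ and $D(l_i) \cap V$ from $Y$, what survives of $Y$ on $P_i$ is contained in the open $P_i$-segment strictly between $g_i$ and $l_i$. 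Applying Lemma~\ref{lem:elements_scattered} to the $c$ paths whose union covers $Y$, with failure probability at most $1/|V|^2$, every two consecutive sampled points on any such path lie within $P_i$-distance $O(|Y|\log|V|/\sqrt{|V|})$. Summing over the $c$ paths gives the claimed $O(c\cdot|Y|\log|V|/\sqrt{|V|})$ size; the degenerate cases (all-high or all-low sampling on some $P_i$) either cannot occur without Phase~1 returning or lead to $Y\cap P_i$ being erased entirely, which I would dispatch in a short remark.

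Second I would prove that any even-separator $e \in Y$ survives with high probability. Note $e$ is filtered out only if some $g_j \in D(e)$ or some $l_j \in A(e)$. For any $s \in S \cap D(e)$, monotonicity of descendants gives $|D(s)| \le |D(e)| \le |V|(d-1)/d$, so $E[count(s,X_s)] \le K(d-1)/d$, while firing the ``high'' threshold requires $count(s,X_s) > Kd/(d+1)$. The gap $d/(d+1) - (d-1)/d = 1/(d(d+1))$ is a positive constant, so a Chernoff/Hoeffding bound on the sum of $K = C_2\log|V|$ independent Bernoulli samples bounds this deviation probability by $|V|^{-\Omega(C_2)}$, which is at most $|V|^{-3}$ once $C_2$ is a sufficiently large constant depending on $d$. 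A symmetric argument, using $|D(s)| \ge |D(e)| \ge |V|/d$ for $s \in S \cap A(e)$, handles the ``low'' threshold. A union bound over the at most $|S|$ sampled nodes and both thresholds contributes failure probability at most $|S|/|V|^2$, and adding the $1/|V|^2$ term from Lemma~\ref{lem:elements_scattered} yields the stated $1 - (|S|+1)/|V|^2$ success bound.

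The main obstacle is the tightness of the Chernoff gap: the algorithm's thresholds $K/(d+1)$ and $Kd/(d+1)$ differ from the separator thresholds $K/d$ and $K(d-1)/d$ by only $1/(d(d+1))$, so the constant $C_2$ in $K = C_2\log|V|$ must be calibrated large enough (in terms of $d$) to overwhelm the $|S| \in O(\sqrt{|V|})$ factor in the union bound. Once that calibration is fixed, the rest is essentially a bookkeeping exercise around the ``no middle estimate'' structural fact, together with two symmetric appeals to Chernoff and one application of Lemma~\ref{lem:elements_scattered}.
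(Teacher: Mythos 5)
Your proof is correct and follows essentially the same two-part route as the paper: Lemma~\ref{lem:elements_scattered} together with the ``no middle estimate'' structure of $S$ gives the size bound, and a Chernoff/union-bound argument shows that no sampled ancestor or descendant of the even-separator $e$ trips the wrong threshold, so $e$ is never filtered. The only cosmetic differences are that you derive the concentration bound directly from the gap $d/(d+1)-(d-1)/d=1/\bigl(d(d+1)\bigr)$ rather than invoking the paper's pre-packaged probability estimates (the two displayed equation blocks borrowed from Afshar \textit{et al.}), and you make the consecutiveness of $g_i$ and $l_i$ in $S\cap P_i$ slightly more explicit than the paper does.
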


\begin{proof}
We first prove that the size of the set returned by \textsf{filter-separator} is $O(c \cdot |Y| \log |V| / \sqrt{|V|})$. We run Line~\ref{line:call-filter} of our \textsf{learn-separator} method only if we do not return in Lines~\ref{line:less}, \ref{line:more}, and \ref{line:in}; hence, for every vertex $s \in S$, we should have that $count(s, X_s) > \frac{K d}{d+1}$ or $count(s, X_s) < \frac{K}{d+1}$ and there should be nodes $\{x,y\} \subseteq S$ such that $count(x, X_x) > \frac{K d}{d+1}$ and $count(y, X_y) < \frac{K}{d+1}$.

Consider an arbitrary path, $P_i$, among these $c$ paths from $r$ to $v$. We argue that \textsf{filter-separator} returns at most $O( |Y| \log |V| / \sqrt{|V|})$ vertices of $P_i$. By Lemma~\ref{lem:elements_scattered}, with probability at least $1 - \frac{1}{|V|^2}$, the distance between any two consecutive vertices of $P_i \cap S$ is at most $O( |Y| \log |V| / \sqrt{|V|})$, and if $|P_i| > 4 |Y| \log |V| / \sqrt{|V|}$, then $|P_i \cap S| \geq 4$. Remind that $l_i \in (P_i \cap S)$ was the oldest node having $count(l_i, X_{l_i}) < \frac{K}{d+1}$ (there is no node $b \in (S \cap A(l_i))$ having $count(b, X_b) < \frac{K}{d+1}$). Also, remind that $g_i \in (P_i \cap S)$ was the youngest node having $count(g_i, X_{g_i}) > \frac{K d}{d+1}$ (there is no node $b \in (S \cap D(g_i))$ having $count(b, X_b) > \frac{K d}{d+1}$). Since we remove ancestors of $g_i$, and descendants of $l_i$ from $P_i$, our \textsf{filter-separator} algorithm returns at most vertices between two consecutive vertices of $P_i \cap S$, having a size of $O( |Y| \log |V| / \sqrt{|V|})$. Since, we have at most $c$ paths, therefore the size of the set returned by this algorithm is at most $O( c \cdot |Y| \log |V| / \sqrt{|V|})$.

Let $e$ be an even-separator in $Y$. Next, we prove that the returned set includes $e$. Afshar {\it et al.}~\cite{DBLP:conf/esa/AfsharGMO20} showed that there exists a constant $C_2 > 0$, as used in Line~\ref{line:c1c2} of our \textsf{learn-separator} algorithm such that the following probability bound hold:

\begin{equation}\label{eq:1}
  \begin{cases}

        \Pr\left(count(s,X_s) \geq \frac{K}{d+1}\right) \geq 1 - \frac{1}{|V|^2} & \mbox{if }count(s,V) \geq \frac{|V|}{d}, \\
        \Pr\left(count(s,X_s) \leq K  \frac{d}{d+1}\right) \geq 1 - \frac{1}{|V|^2} & \mbox{if } count(s,V) \leq |V|  \frac{d-1}{d},\\
               \end{cases}
 \end{equation}
 \begin{equation}\label{eq:2}
  \begin{cases}

        \Pr\left(count(s,X_s) < \frac{K}{d+1}\right) \geq 1 - \frac{1}{|V|^2} & \mbox{if }count(s,V) < \frac{|V|}{d+2}, \\
        \Pr\left(count(s,X_s) > K  \frac{d}{d+1}\right) \geq 1 - \frac{1}{ |V|^2} & \mbox{if }count(s,V) > |V|  \frac{d+1}{d+2} \\

        \end{cases}
 \end{equation}

Let $s \in S$ be an arbitrary ancestor of $e$. Hence, $count(s,V) \geq count(e, V) \geq \frac{|V|}{d}$. By Equation~\ref{eq:1}, with probability at least $1 - \frac{1}{|V|^2}$, $count(s, X_s) \geq \frac{K}{d+1}$. Hence, $s$ cannot be equal with $l_i$, for $1 \leq i \leq c$, and therefore, we do not remove descendants of $s$.
Similarly, for an arbitrary descendant, $s \in S$, of $e$, $count(s,V) \leq count(e, V) \leq \frac{|V| (d-1)}{d}$. By Equation~\ref{eq:1}, with probability at least $1 - \frac{1}{|V|^2}$, $count(s, X_s) \leq \frac{K d}{d+1}$. Hence, $s$ cannot be equal with $g_i$, for $1 \leq i \leq c$, and therefore, we do not remove ancestors of $s$. Therefore, using a union bound, with probability at least $1 - \frac{|S|}{|V|^2}$, we do not remove $e$ from $Y$.

All together, using a union bound with probability at least $1 - \frac{|S| + 1}{|V|^2}$, the call to our \textsf{filter-separator} in Line~\ref{line:call-filter} of our \textsf{learn-separator} algorithm returns a set of size $O(c \cdot |Y| \log |V| / \sqrt{|V|})$ without filtering an even-separator.
\end{proof}

The following lemma shows the efficiency of \textsf{learn-separator} method.

\begin{lemma}\textbf{(Lemma~\ref{lem:learn-separator})}
Let $G = (V, E)$ be a DAG rooted at $r$, with at most constant $c$ directed (not necessarily disjoint) paths from $r$ to vertex $v$, and let $Y = A(v) \cap V$. If $Y$ has an even-separator, then the Algorithm~\ref{alg:learn-separator} returns a near-separator w.h.p.
\end{lemma}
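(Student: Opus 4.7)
The plan is to break the argument into two pieces: first, that any vertex the algorithm returns is actually a near-separator; second, that under the hypothesis of the lemma the algorithm does not return $\mathit{Null}$. Combining these two facts with a single union bound will give the high-probability guarantee.

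For the first piece, I would observe that the two ``return $s$'' lines (Lines~\ref{line:in} and~\ref{line:try_all}) fire only when $\frac{K}{d+1}\le count(s,X_s)\le \frac{Kd}{d+1}$. Applying the contrapositives of the two cases of Equation~\ref{eq:2} (which hold for the constant $C_2$ fixed in Line~\ref{line:c1c2}) yields $\frac{|V|}{d+2}\le count(s,V)\le \frac{|V|(d+1)}{d+2}$ with probability at least $1-2/|V|^2$ for that particular $s$, which is exactly the near-separator condition in Definition~\ref{def:near-separator}.

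For the second piece, I would rule out the three $\mathit{Null}$ exits separately. Line~\ref{line:less} is blocked because $r\in S$ by construction and $count(r,V)=|V|\ge |V|/d$, so Equation~\ref{eq:1} gives $count(r,X_r)\ge \frac{K}{d+1}$ w.h.p. The key idea for Line~\ref{line:more} is that $v\in S$ by construction, and the existence of an even-separator $e\in A(v)\cap V$ forces $v\in D(e)$, whence $D(v)\cap V\subseteq D(e)\cap V$ and so $count(v,V)\le count(e,V)\le |V|(d-1)/d$; Equation~\ref{eq:1} then gives $count(v,X_v)\le \frac{Kd}{d+1}$ w.h.p. For Line~\ref{line:last_Null}, assuming we reach Phase~2, I would invoke Lemma~\ref{lem:filter-separator} to say that $e$ survives the filter w.h.p., and then use Equation~\ref{eq:1} on $e$ (whose true descendant count lies in $[|V|/d,|V|(d-1)/d]$) to certify $count(e,X_e)\in[\tfrac{K}{d+1},\tfrac{Kd}{d+1}]$ w.h.p., so the condition in Line~\ref{line:try_all} is satisfied and some $s$ is returned rather than $\mathit{Null}$.

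The step I expect to be the main obstacle is the Line~\ref{line:more} case: the required bound on $count(v,V)$ is not immediate from the lemma's hypothesis and really exploits the DAG structure together with the assumption that $v$ has an even-separator among its ancestors. Once that bound is in place, the overall claim follows from a union bound over the $O(|V|)$ sampling events (each failing with probability at most $1/|V|^2$) together with the $O(|S|/|V|^2)$ failure probability of \textsf{filter-separator}, all of which total $O(1/|V|)$ and so yield the high-probability guarantee.
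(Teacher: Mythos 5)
Your proposal follows essentially the same route as the paper's proof: rule out each $\mathit{Null}$ exit and each false-positive return using the deviation bounds in Equation~\ref{eq:1} and Equation~\ref{eq:2}, invoke Lemma~\ref{lem:filter-separator} to guarantee the even-separator $e$ survives to Phase~2, and finish with a union bound over the $O(\sqrt{|V|}\log|V|)$ sampling events to get failure probability $O(1/|V|)$. Your reorganization into "anything returned is a near-separator" and "something is returned" is a cleaner presentation than the paper's line-by-line case split, and your justification for blocking Line~\ref{line:less} ($count(r,V)=|V|$ directly since $r$ is the root) is actually tidier than the paper's, but the underlying calculation is the same.

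One small informality worth tightening: in the first piece you phrase the argument as applying "the contrapositives of Equation~\ref{eq:2}" to the returned vertex $s$, concluding it is a near-separator with probability $\ge 1-2/|V|^2$. Since $s$ is selected \emph{after} observing the sample counts, this is really a conditional event, and the rigorous version (which the paper uses and which your closing union bound implicitly supplies) is to bound, for every fixed non-near-separator $s$ in $S$ (resp.\ $Y$), the probability that $count(s,X_s)$ falls in the acceptance window, and then union-bound over all such $s$. Your plan already has the union bound in place, so this is a matter of exposition rather than a gap.
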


\begin{proof}

 We will show that the probability of returning $\mathit{Null}$ or returning a vertex that is not a near-separator is at most $\frac{1}{|V|}$. Let $e \in Y$ be an even-separator. We evaluate this probability according to different lines of the algorithm.

 \begin{itemize}
     \item Lines~\ref{line:less},~\ref{line:more}: Since $e$ is an even-separator, we have $\frac{|V|}{d} \leq count(e, V) \leq \frac{|V| (d-1)}{d}$. On the other hand, as $r \in A(v)$ we have that $count(r,V) \geq count(e, V) \geq \frac{|V|}{d}$. Also, since $v \in D(e)$, we can establish that $count(v,V) \leq count(e, V) \leq \frac{|V| (d-1)}{d}$. Hence, by Equation~\ref{eq:1} and a union bound, with probability at least $1  - \frac{2}{|V|^2}$, we have that $count(r, X_r) \geq \frac{K}{d+1}$, and $count(v, X_v) \leq \frac{K d}{d+1}$, and therefore we do not return $\mathit{Null}$ in Lines~\ref{line:less},~\ref{line:more}.

     \item Line~\ref{line:in}: Consider a vertex $s \in S$ with $count(s, V) < \frac{|V|}{d+2}$. By Equation~\ref{eq:2}, with probability at least $1 - \frac{1}{|V|^2}$, $count(s, X_s) < \frac{K}{d+1}$, and therefore, $s$ is not picked in Line~\ref{line:in} as a near-separator. Similarly, for vertex $s \in S$ with $count(s, V) > \frac{|V| (d+1)}{d+2}$, by Equation~\ref{eq:2}, with probability at least $1 - \frac{1}{|V|^2}$, $count(s, X_s) > \frac{K d}{d+1}$, indicating that we do not pick $s$ as a near-separator Line~\ref{line:in}. Hence, using a union bound with probability at least $1 - \frac{|S|}{|V|^2}$, the returned vertex in Line~\ref{line:in} is a near-separator.

     \item Line~\ref{line:try_all}: Remind that in Lemma~\ref{lem:filter-separator}, we proved that with probability at least $1 - \frac{|S| +1 }{|V|^2}$, our call to \textsf{filter-separator} method in Line~\ref{line:call-filter} of our \textsf{learn-separator} algorithm returns a set including an even-separator, $e$, and the set size of $O\left(c \cdot |Y| \frac{\log |V|}{\sqrt{|V|}}\right) \in O(\sqrt{|V|} \log{|V|})$. Using an argument similar to the one for Line~\ref{line:in}, we can show that with probability at least $1 - \frac{O(\sqrt{|V|} \log{|V|})}{|V|^2}$, if we return a vertex in Line~\ref{line:try_all}, it will be near-separator.

     \item Line~\ref{line:last_Null}: Note that the set returned by \textsf{filter-separator} method includes $e$. By Equation~\ref{eq:1}, with probability at least $1 - \frac{2}{|V|^2}$, $\frac{K}{d+1} \leq count(e, X_e) \leq \frac{K d}{d+1}$ in Line~\ref{line:try_all}, and therefore, we do not get to run Line~\ref{line:last_Null}.

 \end{itemize}
Therefore, using a union bound, we can show that if $Y$ has an even separator, with probability at least $1  - \frac{2 + |S| + |S| + 1 + O(\sqrt{|V|} \log{|V|}) + 2}{|V|^2} \geq 1 - \frac{1}{|V|}$, our \textsf{learn-separator} method returns a near-separator.

\end{proof}

\begin{theorem}\textbf{(Theorem~\ref{thm:DAG-complexity})}
Suppose $G=(V,E)$ is a rooted DAG with $|V| = n$, and maximum constant degree, $d$, with at most constant, $c$ directed (not necessarily disjoint) paths from root, $r$, to each vertex. Algorithm~\ref{alg:spanning-tree} learns a spanning tree of $G$ using $Q(n) \in O(n \log n)$ and $R(n) \in O(\log n)$ w.h.p.
\end{theorem}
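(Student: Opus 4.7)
The plan is to analyze Algorithm~\ref{alg:spanning-tree} as a randomized divide-and-conquer recursion and bound (i) the work per call, (ii) the number of loop iterations needed per call to find a near-separator, and (iii) the depth and total weight of the recursion tree. I will then use Chernoff-type bounds for sums of independent geometric random variables to lift the expectation bounds to high-probability bounds, matching the style of the analogous rooted-tree analysis of Afshar {\it et al.}~\cite{DBLP:conf/esa/AfsharGMO20}.

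First I would bound the cost of one iteration of the inner \textbf{while} loop in a call \textsf{learn-spanning-tree}$(V,r)$. Learning the ancestor set $Y$ of the randomly picked vertex $v$ takes $|V|$ path queries in $1$ round. By Lemma~\ref{lem:learn-separator-complexity}, the call to \textsf{learn-separator} uses $O(c|V|) = O(|V|)$ queries in $O(1)$ rounds because $c$ is constant. Splitting $V$ into $V_1,V_2$ uses another $|V|$ parallel queries. Finally, by Lemma~\ref{lem:parent_complexity}, the call to \textsf{learn-parent} uses $O(|V|)$ queries in $O(1)$ rounds w.h.p. Thus each iteration costs $O(|V|)$ queries in $O(1)$ rounds w.h.p.

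Next I would bound the probability that a single iteration succeeds (returns a valid near-separator that splits $V$ in the required range). By Lemma~\ref{lem:exist-separator}, a uniformly random $v\in V$ has an even-separator in its ancestor set $Y$ with probability at least $1/d$. Conditioned on this event, Lemma~\ref{lem:learn-separator} says \textsf{learn-separator} returns a near-separator $w$ with probability $1-1/|V|$, and by definition of near-separator the split sizes $|V_1|,|V_2|$ then fall in the required $[\,|V|/(d{+}2),|V|(d{+}1)/(d{+}2)\,]$ range, so the test in the \textbf{if} succeeds. Hence each iteration succeeds independently with probability at least $p := 1/(2d)$, say, so the number of iterations $X$ per call is stochastically dominated by a geometric random variable with parameter $p$, giving $\mathbb{E}[X]=O(d)=O(1)$.

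Now I would bound the recursion. Because each successful split produces subproblems of size at most $|V|(d{+}1)/(d{+}2)$, the depth of the recursion tree $B$ is $h=O(\log_{(d+2)/(d+1)} n)=O(\log n)$. The subsets $V_1,V_2$ at any fixed recursion depth are disjoint subsets of $V$, so the sum of $|V|$ over the calls at any fixed level is at most $n$. Since each call does $O(|V|)$ queries per iteration and $\mathbb{E}[X]=O(1)$ iterations per call, the expected total queries per level are $O(n)$, giving expected total $O(n\log n)$ queries over all $O(\log n)$ levels. For the parallel rounds, the two recursive calls at Line 14 run in parallel, so the rounds along any root-to-leaf path in $B$ is the sum $X_1+X_2+\cdots+X_h$ of independent geometrics, each with parameter $p=\Omega(1)$.

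The main obstacle is upgrading these expectations to high-probability statements, which I handle in one step using a Chernoff bound for sums of independent geometric random variables (see, e.g.,~\cite{DBLP:conf/esa/AfsharGMO20}). For any fixed root-to-leaf path in $B$, $\Pr[X_1+\cdots+X_h > \alpha h]\le n^{-(c+1)}$ for a suitable constant $\alpha=\alpha(d,c)$ and any desired $c\ge 1$. Since $B$ has at most $n$ leaves, a union bound over all root-to-leaf paths gives that every path has length $O(\log n)$ with probability $1-n^{-c}$, and simultaneously the lemma-level high-probability events (for \textsf{learn-separator}, \textsf{learn-parent}, and the ancestor-set computations) hold across all $O(n)$ calls by a further union bound, after adjusting the internal sampling constants $C_1,C_2$. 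This yields $R(n)\in O(\log n)$ w.h.p. The total query complexity follows from the same concentration applied level-by-level: each level incurs at most $O(n)$ queries per iteration, and the maximum number of iterations at any level is $O(1)$ w.h.p., so $Q(n)\in O(n\log n)$ w.h.p., completing the proof.
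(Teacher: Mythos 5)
Your proposal follows essentially the same approach as the paper: same per-iteration cost bound via Lemmas~\ref{lem:learn-separator-complexity} and \ref{lem:parent_complexity}, same constant success probability $\Theta(1/d)$ per iteration via Lemmas~\ref{lem:exist-separator} and \ref{lem:learn-separator}, the same two-way recurrence with subproblem sizes in $\bigl[|V|/(d{+}2),\,|V|(d{+}1)/(d{+}2)\bigr]$, and the same Chernoff-for-geometric-sums upgrade from expectation to high probability, citing Afshar {\it et al.} The one imprecision is in your final query-complexity step: the claim that ``the maximum number of iterations at any level is $O(1)$ w.h.p.'' is not literally true, since the maximum of $k$ independent geometric variables with constant parameter is $\Theta(\log k)$ w.h.p.; the correct charging is per vertex (each vertex $v$ participates in one call per level, so the work attributed to $v$ is a sum of $O(\log n)$ independent geometrics, which is $O(\log n)$ w.h.p.\ by Chernoff, then summed over $v$), or equivalently one argues that the total weighted work per level concentrates around its $O(n)$ mean. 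This is a minor slip and the paper itself is equally terse at this point, deferring to Afshar {\it et al.}\ for the details.
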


\begin{proof}
By Lemma~\ref{lem:learn-separator-complexity}, each call to \textsf{learn-separator} method takes at most $O(c |V|) \in O(|V|)$ queries in $O(1)$ rounds, and using Lemma~\ref{lem:exist-separator} and Lemma~\ref{lem:learn-separator}, it returns a near-separator with probability at least $\frac{1}{d} \cdot \left( 1 - \frac{1}{|V|}\right)$. Then, it learns an edge through a call to \textsf{learn-parent} method using $O(|V|)$ queries in $O(1)$ rounds with probability at least $1 - 1/|V|$. Hence, for $|V| \geq 4d$, using a union bound, with probability at least $\frac{1}{2d}$, we have that our near-separator splits the DAG with vertices $V$ into two DAGs of size at least $\frac{|V|}{d+2}$. Hence, we have

\[ Q(n) = Q\left(\frac{n}{d+2}\right) + Q\left( \frac{n \cdot (d+1)}{d+2}\right) + O(n)\]
\[
R(n) = R\left(\frac{n}{d+2}\right) + R\left( \frac{n \cdot (d+1)}{d+2}\right) + O(1)
\]

Therefore, we have $Q(n) \in O(n \log n)$ and $R(n) \in O(\log n)$ in expectation. We can also use a Chernoff bound for sum of the independent geometric random variables (see~\cite{gt-adfai-02, DBLP:books/daglib/0012859}) to prove the bounds
with high probability as of the work of Afshar {\it et al.}~\cite[Theorem~13]{DBLP:conf/esa/AfsharGMO20}.

\end{proof}

\end{appendix}

\end{document}